\theoremstyle{definition}
\newtheorem{definition}{Definition}[section]
\theoremstyle{plain}
\newtheorem{theorem}[definition]{Theorem}
\newtheorem{lemma}[definition]{Lemma}
\newtheorem{proposition}[definition]{Proposition}
\newtheorem{remark}[definition]{Remark}
\newtheorem{assumption}[definition]{Assumption}
\theoremstyle{dotless}
\newcommand{\vega}{{\mbox{\Large \sf $\mathcal{\nu}$}}}
\DeclareMathOperator{\essinf}{ess \, inf\,}
\newcommand{\E}{\mathbb{E}}
\newcommand{\R}{\mathbb{R}}
\renewcommand{\phi}{\varphi}
\newcommand{\calf}{\mathcal{F}}
\newcommand{\lop}{{\mathcal{L}}}
\newcommand{\half}{\frac12}
\newcommand{\pa}{\partial}
\newcommand{\vol}{volatility }
\title{From Smile Asymptotics to Market Risk Measures}
\author{Ronnie Sircar\thanks{ORFE Department,
    Princeton University, Sherrerd Hall, Princeton NJ 08544; {\it
      sircar@princeton.edu}. Work partially supported by NSF grant
    DMS-0807440.} \and Stephan Sturm\thanks{Department of Mathematical Sciences, Worcester Polytechnic Institute,
    Stratton Hall, Worcester, MA 01609; {\it
      ssturm@wpi.edu}. Work partially supported by NSF grant
    DMS-0739195. This work was completed while Postdoctoral Research Associate at Princeton University.} }
\begin{document}
\maketitle
\begin{abstract} 
The left tail of the implied volatility skew, coming from quotes on out-of-the-money put options, can be thought to reflect the market's assessment of the risk of a huge drop in stock prices. 
We analyze how this market information can be integrated into the theoretical framework of convex monetary measures of risk. In particular, we make use of indifference pricing by dynamic convex risk measures, which are given as solutions of backward stochastic differential equations (BSDEs), to establish a link between these two approaches to risk measurement. We derive a characterization of the implied volatility in terms of the solution of a nonlinear PDE and provide a small time-to-maturity expansion and numerical solutions. This procedure allows to choose convex risk measures in a conveniently parametrized class, distorted entropic dynamic risk measures, which we introduce here, such that the asymptotic volatility skew under indifference pricing can be matched with the market skew. We demonstrate this in a calibration exercise to market implied volatility data.
\end{abstract}

{\bf Keywords} dynamic convex risk measures, volatility skew, stochastic volatility models, indifference pricing, backward stochastic differential equations\\

{\bf AMS subject classification} 91G20, 91G80, 60H30\\

{\bf JEL subject classification} G10

\section{Introduction}
Risk measurement essentially conveys information
about tails of distributions. However, that information is also
contained in market prices of insurance securities that are
contingent on a large (highly unlikely) downside, if
we concede that those prices are mostly reflective of protection
buyers' risk aversion. Examples are out-of-the-money put options
that provide protection on large stock price drops, or senior
tranches of CDOs that protect against the default risk of say $15-30\%$
of investment grade US companies over a 5 year period.

A central regulatory and internal requirement in recent years, in
the wake of a number of financial disasters and corporate
scandals, has been that firms report a measure of the risk of
their financial positions. The industry-standard risk measure,
value-at-risk, is widely criticized for not being convex and
thereby possibly penalizing diversification, and a number of natural
problems arise:
\pagebreak
\begin{enumerate}
\item How to construct risk measures with good properties.
\item Computation of these risk measures under typical financial
models.
\item Choice: what is the ``right'' risk measure?
\end{enumerate}
The first issue has been extensively studied in the static case
\cite{Artzner,FollSchiedpaper} and recent developments in extending to dynamic
risk measures with good time-consistency and/or recursive properties
are discussed, e.g., in \cite{BK,KS2,muszar-long,FScB}.  However, concrete examples
of dynamic, time-consistent convex risk measures are scarce, and they
typically have to be defined abstractly, for example via the driver of
a backward stochastic differential equation (BSDE) or as the limit of
discrete time-consistent risk measures \cite{Stad}.  As a result,
intuition is lost, and there is at present little understanding what
the choice of driver says about the measure of risk. Or, to put it
another way, how can the driver be constructed to be consistent with
risk aversion reflected in the market?

Let $\xi$ be a bounded random variable representing a financial
payoff whose value is known at some future time $T<\infty$. A
classical example of a convex risk measure, the entropic risk measure, is related to exponential
utility: 
\begin{equation} \varrho(\xi) = \frac{1}{\gamma}\log\left(\E\left[e^{-\gamma
\xi}\right]\right) , \label{exputility}
\end{equation}
where $\gamma>0$ is a risk-aversion coefficient.
When extending to {\it dynamic} risk measures $\varrho_t(\cdot)$ adapted to some filtration $(\calf_t)$, a
desirable property is (strong) time-consistency 
\[ \varrho_s(-\varrho_{t}(\xi)) = \varrho_s(\xi), \qquad 0\leq s\leq t\leq T. \]
This flow property is important if $\varrho_t$ is used as a basis for a
pricing system. The static entropic risk measure (\ref{exputility})
generalizes to 
\begin{equation}
  \varrho_t(\xi) = \frac{1}{\gamma}\log\left(\E\left[e^{-\gamma
\xi}\mid \calf_t\right]\right). \label{entrodyn}
\end{equation}
The flow property follows simply
from the tower property of conditional expectations. However, finding
other directly-defined examples is not easy, and to have a reasonable
class of choices, we need to resort to more abstract constructions. 

In a Brownian-based model, time-consistent dynamic risk measures can
be built through BSDEs, as shown in \cite{BK,KS2} (compare also \cite{RoGi}), extending the work
of Peng \cite{peng}. That is, on a probability space with a
$d$-dimensional Brownian motion $W$ that generates a filtration
$(\calf_t)$, the risk measure of the $\calf_T$-measurable random
variable $\xi$ (taking values in $\R$ for simplicity) is computed from the solution
$(R_t,Z_t)$, which takes values in $\R\times\R^d$, of the 
BSDE
\begin{eqnarray*}
-dR_t & = & g(t,Z_t)\,dt - Z_t^*\,dW_t\\
R_T & = & -\xi,
\end{eqnarray*}
where $*$ denotes transpose.  Here the driver $g$, which defines the
risk measure, is Lipschitz and convex in $z$ and satisfies
$g(t,0)=0$. The solution is a process $R$, taking values in $\R$ that matches the {\it
  terminal} condition $-\xi$ on date $T$ (when $\xi$ is revealed and the
risk is known), and a process $Z$ taking values in $\R^d$ that, roughly speaking, keeps
the solution non-anticipating.  Then $\varrho_t(\xi):=R_t$ defines a
time-consistent dynamic convex risk measure.  However, the possibility
to offset risk by dynamically hedging in the market needs to be
accounted for. Setting aside technicalities for the moment, this
operation leads to a modification of the driver.

The left tail of the implied volatility skew observed in equity
markets is a reflection of the premium charged for out-of-the-money
put options. The bulk of the skew reveals the heavy left tail in the
risk-neutral density of the stock price $S_T$ at expiration, but the
very far left tail, where investor sentiment and crash-o-phobia takes
over, could be interpreted as revealing information about the
representative market risk measure and its driver $g$, if we assume
prices are consistent with this kind of pricing mechanism. The
question then is to extract constraints on the driver from the
observed tails of the skew, an inverse problem. In the application to equity options in Section \ref{ec}, we assume the mid-market option prices reflect the premium a risk-averse buyer is willing to pay. We do not relate buyers' and sellers' prices to the bid-ask spread, since that is more likely related to the market maker's profit. 

To put our analysis into a broader framework, we observe that the underlying structural question is the inference of preference structures from observable data. The idea of using (at least in theory)  observable consumption and investment streams to reveal the preference structure of a rational utility maximizing investor dates back to Samuelson in the 1940s and  Black in the 1960s - for a recent overview on this ``backward approach'' to utility theory we refer to \cite{cho}. The spirit of our presentation is a similar one, except we deal with dynamic risk measures rather than utility functions, and the observable data are not given as consumption and investment strategies but as readily available market implied volatilities.

The main goal of the current article is to develop short-time asymptotics that can be used for the
inverse problem of extracting information about the
driver $g$ from the observed skew.  This could be used to construct an approximation to the driver and then to value more exotic derivatives in a way consistent with the risk measure. The information could also be used to quantify market perception of tail events, particularly when they depart from usual. Some studies have discussed the steepening of skew slopes in the run up to financial crises without a corresponding overall raise in volatility level. Inferring, fully or partially, a risk measure driver, could be used for detection of increased wariness of a crash.

Berestycki {\it et al.} \cite{BBF} presented short-time
asymptotics for implied volatilities for {\it no arbitrage pricing}
under a given risk-neutral measure in stochastic volatility models. Further work in this direction
includes, among others, \cite{FFF,FoJa09a,JFL} and references therein.
In Section \ref{Sec2}, we extend this analysis to
the nonlinear PDEs characterizing indifference pricing under dynamic
convex risk measures. 

We find (Theorem \ref{maintheorem}) that the zero-order term in the
short-time approximation is the same as found in no-arbitrage pricing
by \cite{BBF}. The next order term is the solution of an inhomogeneous
linear transport equation that sees only a particular slope of the
partially Legendre-transformed driver, but is independent of the size
of the options position (see equation \eqref{firstPDE}).

Section \ref{ec} illustrates the theoretical findings by focusing on a particular class of drivers, introducing distorted entropic convex dynamic risk measures. First we develop explicit calculations for the small time expansion in the 
Hull-White stochastic volatility model to illustrate the impact of the distortion parameter
on the implied volatility skew. In Section \ref{calib}, as a proof of concept, we perform a preliminary calibration exercise of the short time asymptotics to S\&P 500 implied volatilities close to maturity. This allows to estimate the stochastic volatility model parameters from the liquid central part of the skew, and to recover the distortion parameter of our family of dynamic convex risk measures from far out-of-the-money put options.

We illustrate the parameter impact  for longer dated options in a numerical study (via the pricing PDE) of arctangent stochastic volatility driven by an Ornstein-Uhlenbeck process.  Section \ref{conc} contains the conclusions and Section \ref{proofs} gives the more technical proofs omitted in the exposition.

\section{Heuristics and Statement of Results\label{Sec2}}

We consider a model of a financial market consisting of a risk-free
bond bearing no interest and some stock following the stochastic
volatility model on the filtered probability space $(\Omega,
\mathcal{F}, (\mathcal{F}_t),P)$ 
\begin{equation}\label{SVM}
\left\{
\begin{array}{lll}
dS_t &= \mu(Y_t) S_t \,dt + \sigma(Y_t) S_t \, dW_t^1, \qquad & S_0 = S; \\
dY_t & = m(Y_t) \,dt + a(Y_t) \bigl(\rho \, dW_t^1 +\rho'\,
dW_t^2\bigr), \qquad & Y_0 = y, 
\end{array}
\right. 
\end{equation}
where $W^1$, $W^2$ are two independent Brownian motions generating
$(\mathcal{F}_t)$ and $\rho' = \sqrt{1-\rho^2}$. 

\begin{assumption}\label{mainassump}
We assume that:
\begin{itemize}
\item[i)] $\sigma$, $a \in C_{loc}^{1+\beta}(\mathbb{R})$, where 
$C_{loc}^{1+\beta}(\mathbb{R})$ is the space of differentiable
functions with locally H\"{o}lder-continuous derivatives with
H\"{o}lder-exponent $\beta>0$; 

\item[ii)] both $\sigma$ and $a$ are bounded and bounded away from
zero: 
\[ 0<\underline{\sigma}<\sigma <\overline{\sigma}<\infty, \quad
\mbox{and} \quad 
0<\underline{a}<a <\overline{a}<\infty;\]

\item[iii)] $\mu$, $m \in
C_{loc}^{0+\beta}(\mathbb{R})$,  and $\vert \mu \vert < \overline{\mu} <
\infty$. 
\end{itemize}
\end{assumption}

The pricing will be done via the indifference pricing mechanism
for dynamic convex risk measures, which are introduced in the next
subsection.

\subsection{Dynamic Convex Risk Measures, Indifference Pricing and BSDEs}

\begin{definition}
We call the family $\varrho_{t} : L^\infty(\Omega, \mathcal{F}_T, P) \to
L^\infty(\Omega, \mathcal{F}_t, P)$, $0\leq t \leq T$, a convex dynamic
risk measure, if it satisfies for all $t \in [0,T]$ and all $\xi$,
$\xi^1$, $\xi^2 \in L^\infty(\Omega, \mathcal{F}_T,P)$ the following
properties. 
\begin{equation}\label{charrho}
\begin{array}{rl}
i) & \mbox{Monotonicity: }\xi^1 \geq \xi^2\,P\mbox{-a.s implies
}\varrho_{t}(\xi^1) \leq \varrho_{t}(\xi^2);\\ 
ii) & \mbox{Cash invariance: }\varrho_{t}(\xi+m_t) = \varrho_{t}(\xi)-m_t
\mbox{ for all }m_t \in L^\infty(\Omega, \mathcal{F}_t,P);\\ 
iii) & \mbox{Convexity: } \varrho_{t}(\alpha \xi^1 + (1-\alpha) \xi^2)
\leq \alpha \varrho_{t}(\xi^1) + (1-\alpha)\varrho_{t}(\xi^2) \mbox{ for all
} \alpha \in [0,1];\\ 
iv) & \mbox{Time-consistency } \varrho_{t}(\xi^1) = \varrho_{t}(\xi^2)
\mbox{ implies }  \varrho_{s}(\xi^1) = \varrho_{s}(\xi^2) \mbox{ for all }
0\leq s\leq t. 
\end{array}\nonumber
\end{equation}
\end{definition}
We note that if the risk measure is additionally normalized,
i.e. $\varrho_{t}(0)=0$ for all $t\in [0,T]$, then iv) is equivalent
to the stronger property $\varrho_{s}(-\varrho_{t}(\xi)) =
\varrho_{s}(\xi)$ for all $0\leq s\leq t$ \cite[Lemma 3.5]{KS2}. The
risk measure $\varrho_{t}(\xi)$ should be understood as the risk
associated with the position $\xi$ at time $t$.

If $\varrho_{t}$ is normalized, this is nothing else than the minimal
capital requirement at time $t$ to make the position riskless since
$\varrho_{t}( \xi + \varrho_{t}(\xi)) =0$. In this static setting, the
certainty equivalent price of a buyer of a derivative $\xi \in
L^\infty(\Omega,\mathcal{F}_T,P)$ at time $t$ is just the cash amount
for which buying the derivative has equal risk to not buying it.

In fact we are much more interested in the case where the buyer of the
security is allowed to trade in the stock market to hedge her risk. In
describing admissible strategies we follow the setting of continuous
time arbitrage theory in the spirit of Delbaen-Schachermayer (for an
overview, we refer to the monograph \cite{DS}). Denote therefore by
$\Theta_t$ the set of all admissible hedging strategies from time $t$
onwards, i.e. all progressive processes such that $\theta_t = 0$ and $
\int_t^u \theta_s (\mu(Y_s) \, ds + \sigma(Y_s) \, dW^1_s)$ exists for
all $u \in ]t,T]$ and is uniformly bounded from below, and set
\begin{equation*}
\mathcal{K}_t := \left\{ \int_t^T \theta_s (\mu(Y_s) \, ds +
\sigma(Y_s) \, dW^1_s) \, : \, \theta \in \Theta_t\right\}.
\end{equation*}
The set of all superhedgeable payoffs is then given by
$\mathcal{C}_t := (\mathcal{K}_t - L^0_+) \cap L^\infty$,
where $L^0_+$ denotes the set of all almost surely non-negative random
variables.

The residual risk at time $t$ of the derivative $\xi\in L^\infty(\Omega,
\mathcal{F}_T,P)$ after hedging is given by 
\begin{equation}\label{posthedge}
\hat{\varrho}_{t}(\xi) := \underset{h \in \mathcal{C}_t}{\essinf} \varrho_{t}(\xi+h).
\end{equation}
Thus, assuming that the buyer's wealth at time $t$ is $x$, her
dynamic indifference price $P_t$, which can be viewed as the certainty
equivalent after optimal hedging in the underlying market, is given via 
$\hat{\varrho}_{t}(x+\xi-P_t) = \hat{\varrho}_{t}(x)$,
whence, using cash invariance,
\begin{equation}
P_t = \hat{\varrho}_{t}(0) -\hat{\varrho}_{t}(\xi).\label{indiffdef}
\end{equation}
We note, while restricting ourselves to the buyer's indifference
price, all our considerations are easily adaptable to the seller's
indifference price by a simple change of signs of $\xi$ and $P_t$ in
\eqref{indiffdef}.

A convenient class of dynamic convex risk measures to which we will
stick throughout this paper are defined from solutions of BSDEs. Assume that $g:
[0,T] \times \Omega \times \mathbb{R}^2 \to \mathbb{R}$ is a $\mathcal{P} \otimes
\mathcal{B}(\mathbb{R}^2)$ predictable function which is continuous,
convex, and quadratic (i.e. bounded in modulus by a quadratic
function) in the $\mathbb{R}^2$-component.
Next, let $\xi \in
L^{\infty}(\Omega, \mathcal{F}_T,P)$ be a given bounded financial
position. Then the BSDE
\begin{align}\label{bsde}
R_t = -\xi + \int_t^T g(t, \omega, Z^1_s,Z^2_s) \, ds - \int_t^T Z^1_s\, dW_s^1 -
\int_t^T Z^2_s\, dW_s^2 
\end{align}
admits a unique $\mathcal{F}_t$-adapted solution $(R_t,Z^1_t,Z^2_t)$,
which defines a dynamic convex risk measure via $\varrho_{t}(X) := R_t$
\cite[Theorem 3.21]{BK}.

The existence of a solution of the BSDE \eqref{bsde} in this quadratic
setting was first proved by Kobylanski \cite[Theorem 2.3]{Kob}, with
some corrections to their arguments given by Nutz \cite[Theorem 3.6]{Nut},
whereas the uniqueness follows from the convexity of the driver as
shown in \cite[ Corollary 6]{BH}. From a financial perspective, the
components $Z^1$, $Z^2$ of the ``auxiliary'' process $Z$ can be
interpreted as risk sources, describing the risk stemming from the
traded asset and the volatility process respectively.

\subsection{Transformed BSDE under Hedging}
To assure the solvability of the BSDEs and PDEs that arise in our
setting, we have to restrict the class of admissible drivers.
Throughout, subscripts of functions indicate in the PDE context partial
derivatives with the respect to the respective components. 
\begin{definition}\label{admiss}
We call a  function $g : \mathbb{R}^2 \to \mathbb{R}$ a strictly quadratic driver  (normalized strictly quadratic driver)  if it satisfies the
following 
conditions i)-iii) (resp. o)-iii)): 
 \begin{equation}\label{charg}
\begin{array}{rl}
o)& g(0,0) =0;\\
i) & g\in C^{2,1}(\mathbb{R}^2);\\
ii) & g_{z_1 z_1}(z_1,z_2)>0 \quad \mbox{for all } (z_1,z_2) \in \mathbb{R}^2;\\
iii) & \mbox{there exist constants } c_1, \,c_2>0 \mbox{ such that}\\ 
&\quad c_1 \Bigl(\frac{z_1^2}{4c_1^2} -(1+z_2^2)\Bigr) \leq
g(z_1,z_2)  \leq c_2 \bigl(1+ z_1^2 +z_2^2\bigr) \quad
\mbox{for all } (z_1,z_2) \in \mathbb{R}^2. 
\end{array}\nonumber
\end{equation}
\end{definition}
The normalization of the driver (condition o)) corresponds to the normalization of
the risk measure.  
\begin{remark}
  To ease the presentation, we only work with drivers that do not depend
  explicitly on time or on $\omega$. While any dependence on $R_t$ would destroy the cash invariance, it is not difficult to add an additional dependence
  of $g$ on time (requiring some local H\"{o}lder-continuity in the time component of $g$).  The small time expansion Section
  \ref{expansions} up to first order works in this case exactly as in the time independent case (evaluating the driver at $t=T$), and the higher order expansions can be adapted straightforwardly. For analysis of filtration-consistent, translation invariant nonlinear expectations, we refer to \cite{Coq}.
\end{remark}

In passing from the principal risk measure defined by $g$ to the
residual risk measure after hedging, as in \eqref{posthedge}, we will
need the Fenchel-Legendre transform of $g$ in its first component, namely
\begin{equation}
\hat{g}(\zeta,z_2) := \sup_{z_1 \in \mathbb{R}}\Bigl(\zeta z_1 -
g(z_1,z_2)\Bigr), \qquad \zeta\in\mathbb{R}. \label{ghatdef}
\end{equation}

\begin{lemma}\label{admissghat}
  Given that $g$ is a (normalized) strictly quadratic driver, then the
  risk-adjusted driver $\hat{g}$ defined in \eqref{ghatdef} 
is also a (normalized) strictly quadratic  driver.
\end{lemma}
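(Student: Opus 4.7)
The plan is to verify the four defining properties of a (normalized) strictly quadratic driver for $\hat{g}$ one by one, using standard convex-analytic machinery applied to the Fenchel transform.

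First I would establish existence, uniqueness and smoothness of the maximizer in \eqref{ghatdef}. Strict convexity (ii) together with the coercivity built into the lower quadratic bound (iii) makes $z_1 \mapsto \zeta z_1 - g(z_1, z_2)$ strictly concave with a unique supremizer $z_1^\star(\zeta, z_2)$ characterized by the first-order condition $g_{z_1}(z_1^\star, z_2) = \zeta$. Since $g_{z_1 z_1} > 0$, the implicit function theorem yields that $z_1^\star$ inherits the $C^{1,1}$ regularity of $g_{z_1}$ in its arguments. From the representation $\hat{g}(\zeta, z_2) = \zeta z_1^\star - g(z_1^\star, z_2)$, the envelope identity $\hat{g}_\zeta = z_1^\star$ and the derivative rule $\hat{g}_{\zeta\zeta} = 1/g_{z_1 z_1}(z_1^\star, z_2) > 0$ follow immediately, securing (i) and (ii).

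Next I would derive the quadratic bounds (iii) for $\hat{g}$ by direct substitution into the Legendre transform. Plugging the lower bound on $g$ inside the supremum and explicitly maximizing the resulting quadratic in $z_1$ yields an upper bound of the form $\hat{g}(\zeta, z_2) \le c_1(\zeta^2 + 1 + z_2^2)$. Symmetrically, testing the supremum at $z_1 = \zeta/(2c_2)$ in combination with the upper bound on $g$ produces a lower bound $\hat{g}(\zeta, z_2) \ge \zeta^2/(4c_2) - c_2(1 + z_2^2)$. Both have precisely the shape required by (iii), with the roles of the constants $c_1$ and $c_2$ effectively interchanged.

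For the normalized case, condition (o) amounts to showing $\inf_{z_1} g(z_1, 0) = 0$. I would argue that the translation invariance and normalization of the underlying risk measure force the unique critical point of the strictly convex map $g(\cdot, 0)$ to sit at $z_1 = 0$, so that its minimum equals $g(0, 0) = 0$, and hence $\hat{g}(0, 0) = 0$. The main obstacle I anticipate is carefully tracking regularity through the Legendre transform: since the $C^{2,1}$ assumption is a Hölder statement on second derivatives, the implicit function theorem step has to be applied in such a way that no derivative is lost when forming $\hat{g}_\zeta = z_1^\star$ and differentiating again. The quadratic-bounds and normalization steps are comparatively short once this analytic setup is in place.
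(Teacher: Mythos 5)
Your proposal is correct in substance and follows the same skeleton as the paper's proof (first the quadratic bounds, then regularity and strict convexity in the first argument, then normalization), but executes the first two steps in a more self-contained way. For condition iii) of Definition \ref{admiss}, the paper observes that the two bounds are Legendre duals of one another and invokes monotonicity and bi-duality of the one-variable Fenchel transform; your direct computation inside the supremum (maximizing $\zeta z_1 - z_1^2/(4c_1) + c_1(1+z_2^2)$, and testing at $z_1=\zeta/(2c_2)$) produces exactly the same bounds with the roles of $c_1$ and $c_2$ swapped, so it is the explicit version of that citation. For i) and ii), the paper cites the result of Hiriart-Urruty and Lemar\'echal on transforms of $1$-coercive functions with positive second derivative and then gets the $z_2$-derivative from a difference quotient; your first-order-condition/envelope argument ($g_{z_1}(z_1^\star,z_2)=\zeta$, $\hat g_\zeta=z_1^\star$, $\hat g_{\zeta\zeta}=1/g_{z_1z_1}(z_1^\star,z_2)>0$) is the standard proof behind that citation. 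One technical caution: applying the implicit function theorem jointly in $(\zeta,z_2)$ to get a $C^{1}$ maximizer uses continuity of $g_{z_1 z_2}$, which the $C^{2,1}$ hypothesis need not supply; for the $z_2$-derivative it is safer to argue as the paper does, via the envelope identity $\hat g_{z_2}(\zeta,z_2)=-g_{z_2}\bigl(z_1^\star(\zeta,z_2),z_2\bigr)$, which only needs continuity of $z_1^\star$, while the one-variable IFT in $\zeta$ (with $z_2$ frozen) already gives $\hat g_{\zeta\zeta}$.

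The genuinely weak point is your treatment of o). You correctly reduce it to $\inf_{z_1} g(z_1,0)=0$, but the justification offered, that translation invariance and normalization of the underlying risk measure force the critical point of $g(\cdot,0)$ to sit at $z_1=0$, does not follow from the lemma's hypotheses: normalization of the risk measure corresponds precisely to $g(0,0)=0$ and gives no information about $g_{z_1}(0,0)$. Indeed, $g(z_1,z_2)=\tfrac12 z_1^2+z_1+\tfrac12 z_2^2$ satisfies o)--iii) of Definition \ref{admiss} (with $c_1=c_2=1$) and is convex on $\mathbb{R}^2$, yet $\hat g(0,0)=\tfrac12\neq 0$. So the normalized half of the statement really uses the additional (implicit) property that $z_1=0$ minimizes $g(\cdot,0)$, equivalently $g(\cdot,0)\ge 0$, which holds for the drivers actually used later, e.g. \eqref{distorted}. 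To be fair, the paper's own proof is equally terse here ("$\hat g(0,0)=0$ follows from the definition if $g(0,0)=0$"), so this is a shared gloss; but your appeal to properties of the risk measure is not a valid derivation of the missing fact, and you should either assume $g(\cdot,0)\ge 0$ (or $g_{z_1}(0,0)=0$) explicitly or note that o) for $\hat g$ may fail otherwise.
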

\begin{proof}
To show $\hat{g}$ satisfies condition iii) of Definition \ref{admiss},
we fix $z_2$ and treat the function as classical Fenchel-Legendre
transform in one variable. Therefore it holds for proper, continuous
convex functions $f$, $g$, that $f \leq g$ implies $\hat{f} \geq
\hat{g}$ and $\hat{\hat{f}}=f$, \cite[Proposition E.1.3.1 and
Corollary E.1.3.6]{HUL}. So the statement is proved by noting that
\begin{equation*}
\sup_{z_1}\Bigl(\xi z_1 - c \bigl(1+ z_1^2 +z_2^2\bigr)\Bigr) = c
\Bigl(\frac{z_1^2}{4c^2} -(1+z_2^2)\Bigr)  
\end{equation*}
for any positive constant $c$.

To show $i)$ and $ii)$ in Definition \ref{admiss}, we note that
condition $iii)$ implies that $g$ is $1$-coercive in $z_1$,
i.e. $g(z_1,z_2)/\vert z_1\vert \to \infty$ as $z_1 \to \pm \infty$
for fixed $z_2$. Now we can use the fact, that the Fenchel-Legendre
transform of any $1$-coercive, twice differentiable function with
positive second derivative is itself $1$-coercive and twice
differentiable with positive second derivative, cf. \cite[Corollary
X.4.2.10]{HUL2}. Thus it remains only to prove the differentiability
of $\hat{g}$ with respect to $z_2$ which is a consequence of the
differentiability properties of $g$: writing down the difference
quotient and noting that the maximizer is differentiable, the positive
second derivative with respect to the first component yields the
existence of a finite limit. Finally $\hat{g}(0,0)=0$ follows from the
definition if $g(0,0)=0$.
\end{proof}

In other words, the class of strictly quadratic drivers is invariant under the
convex conjugation in the first component and the class of normalized
strictly quadratic drivers is an invariant subclass thereof. 

\begin{remark}\label{rem26}
We note that it is important in our setting to stick to the theory of
quadratic drivers, since if $g$ would be a Lipschitz driver, $\hat{g}$
would be no more a proper function. This fact is easily seen, since
from the Lipschitz condition it follows that 
\begin{equation*}
g(z_1,z_2) \leq L\Bigl( 1+ \sqrt{z_1^2+z_2^2}\Bigr) \leq \sqrt{2} L
\bigl(1 + \vert z_1\vert + \vert z_2 \vert \bigr)  
\end{equation*}
for some constant $L$ and hence
\begin{align*}
\hat{g}(\zeta ,z_2) = \sup_{z_1}\Bigl(\zeta z_1 - g(z_1,z_2)\Bigr)
&\geq \sup_{z_1}\Bigl(\zeta z_1 -  \sqrt{2} L \bigl(1 + \vert z_1\vert
+ \vert z_2 \vert \bigr)\Bigr) \\ 
&= \left\{ \begin{array}{rl} \infty & \mbox{if } \vert \zeta \vert >
    \sqrt{2}L \\  -\sqrt{2}L(1+ \vert z_2\vert)& \mbox{if } \vert
    \zeta \vert \leq \sqrt{2}L.\end{array}\right. 
\end{align*}
\end{remark}
From now on we will assume that $g$ is convex as a function on
$\mathbb{R}^2$ and an admissible driver. Our next step is to describe
the dynamic hedging risk in terms of BSDEs. These results are in
essence due to Toussaint, \cite[Section 4.4.1]{Tou}. Since his thesis
is not easily available, we will nevertheless
state the proofs here. It is convenient to introduce a notation for
the Sharpe ratio:
\begin{equation}
\lambda(y) := \frac{\mu(y)}{\sigma(y)}. \label{Sharpedef}
\end{equation}

\begin{proposition}\label{riskBSDE}
The risk of the financial position $\xi \in L^{\infty}(\Omega,
\mathcal{F}_T,P)$ under hedging is $\hat{\varrho}_{t}(\xi) = \hat{R_t}^{(\xi)}$
where $\hat{R}_t^{(\xi)}$ is given via the unique solution of the BSDE 
\begin{align}\label{transformBSDE}
\hat{R}_t^{(\xi)} = -\xi - \int_t^T \hat{Z}_s^1 \lambda(Y_s)
+\hat{g}\bigl(-\lambda(Y_s), \hat{Z}_s^2\bigr) \, ds -
\int_t^T \hat{Z}^1_s\, dW_s^1 - \int_t^T \hat{Z}^2_s\, dW_s^2. 
\end{align}
Moreover, $\hat{\varrho}_t$ is itself a dynamic convex risk measure.
\end{proposition}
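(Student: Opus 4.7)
The plan is to recast the essential infimum defining $\hat{\varrho}_t$ as a control problem whose value is characterized by a single BSDE obtained by minimizing the driver pointwise. First I fix $\theta \in \Theta_t$ and set $h = \int_t^T \theta_s(\mu(Y_s)\,ds + \sigma(Y_s)\,dW_s^1)$; then $\varrho_t(\xi + h) = R^{\theta}_t$ for the solution $(R^{\theta}, Z^1, Z^2)$ of \eqref{bsde} with terminal condition $-(\xi+h)$. The hedging integral can be absorbed into the $W^1$-martingale part by setting $\tilde Z^1_s := Z^1_s + \theta_s \sigma(Y_s)$, converting the BSDE into one with terminal condition $-\xi$ and $\theta$-dependent driver
\begin{equation*}
g^{\theta}(s, \tilde z_1, z_2) \;=\; g\bigl(\tilde z_1 - \theta_s \sigma(Y_s),\,z_2\bigr) - \theta_s \mu(Y_s).
\end{equation*}

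Next I would take the pointwise infimum over $\theta \in \mathbb{R}$. The substitution $w = \tilde z_1 - \theta\sigma(Y_s)$ together with \eqref{Sharpedef} gives
\begin{equation*}
\inf_{\theta \in \mathbb{R}} g^{\theta}(s, \tilde z_1, z_2) \;=\; -\lambda(Y_s)\,\tilde z_1 \;-\; \sup_{w}\bigl((-\lambda(Y_s))\,w - g(w, z_2)\bigr) \;=\; -\lambda(Y_s)\,\tilde z_1 - \hat g\bigl(-\lambda(Y_s),\, z_2\bigr),
\end{equation*}
which is exactly the driver in \eqref{transformBSDE}. By Lemma \ref{admissghat} $\hat g$ is a strictly quadratic driver, and by Assumption \ref{mainassump} the Sharpe ratio $\lambda$ is bounded, so the transformed driver is again quadratic in $(\tilde z_1, z_2)$; Kobylanski's existence theorem and the convex-driver uniqueness of \cite{BH} then deliver a unique bounded adapted solution $(\hat R, \hat Z^1, \hat Z^2)$ of \eqref{transformBSDE}.

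The core step is a comparison-plus-attainment argument. Since $g^{\theta} \geq g^{\ast}$ pointwise for every $\theta \in \Theta_t$, the comparison theorem for quadratic BSDEs \cite{Kob,BH} yields $R^{\theta}_t \geq \hat R_t$ almost surely and hence $\hat{\varrho}_t(\xi) \geq \hat R_t$. For the reverse inequality, strict convexity of $g$ in its first argument (condition ii) of Definition \ref{admiss}) guarantees a unique maximizer $w^{\ast}_s$ in $\hat g(-\lambda(Y_s), \hat Z^2_s)$, characterized by $g_{z_1}(w^{\ast}_s, \hat Z^2_s) = -\lambda(Y_s)$, so that the candidate $\theta^{\ast}_s := (\hat Z^1_s - w^{\ast}_s)/\sigma(Y_s)$ satisfies $g^{\theta^{\ast}} = g^{\ast}$ along $(\hat R,\hat Z^1,\hat Z^2)$, forcing $R^{\theta^{\ast}}_t = \hat R_t$. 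The main technical obstacle is to verify $\theta^{\ast} \in \Theta_t$: I would control $w^{\ast}$ via a linear growth bound in $(\lambda, \hat Z^2)$ deduced from the implicit function theorem and the quadratic bounds of Definition \ref{admiss}, and combine the boundedness of $\sigma$ (Assumption \ref{mainassump}) with the BMO property of $\hat Z^1\cdot W^1$, which holds for quadratic BSDEs with bounded terminal data, to ensure the wealth integral exists and is uniformly bounded below.

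Finally, to show that $\hat{\varrho}_t$ is itself a dynamic convex risk measure I would check the four axioms directly from the BSDE representation: monotonicity is the comparison theorem for \eqref{transformBSDE}; cash invariance is the additive role of the terminal datum; time-consistency is the BSDE flow property as in \cite[Theorem 3.21]{BK}; and convexity follows from joint convexity of the transformed driver. The latter is the subtlest point: since $g$ is jointly convex on $\mathbb{R}^2$, the function $(w, z_2) \mapsto -\lambda(Y_s)\,w - g(w, z_2)$ is jointly concave, and partial maximization over $w$ preserves concavity in $z_2$, so $\hat g(-\lambda(Y_s),\cdot)$ is concave and $-\hat g(-\lambda(Y_s),\cdot)$ is convex; adding the linear term $-\lambda(Y_s)\tilde z_1$ preserves joint convexity in $(\tilde z_1, z_2)$.
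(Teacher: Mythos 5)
Your overall route is genuinely different from the paper's: the paper does not re-derive the hedged risk measure from scratch, but invokes the infimal-convolution representation of Kl\"oppel--Schweizer \cite[Theorem 7.17]{KS2} (acknowledging a ``straightforward though tedious'' adaptation from their Lipschitz/$L^2$ setting to quadratic drivers), and then performs exactly the algebraic rewriting you do, namely $\inf_v\bigl(g(z_1+\sigma v,z_2)+\mu v\bigr)=-\lambda z_1-\hat g(-\lambda,z_2)$, plus \cite[Corollary 6]{BH} for uniqueness and the concavity of $\hat g$ in $z_2$ for convexity of $\hat\varrho_t$. Your identification of the transformed driver, the first-order condition $g_{z_1}(w^\ast,z_2)=-\lambda$, and the marginal-concavity argument for $\hat g(-\lambda,\cdot)$ are all correct and coincide with the paper's computations; what you are really doing is re-proving the Kl\"oppel--Schweizer step by a comparison-plus-verification argument instead of citing it.

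The genuine gap is in the attainment half, i.e.\ in showing $\essinf_{h\in\mathcal{C}_t}\varrho_t(\xi+h)\le \hat R_t$. The essential infimum in \eqref{posthedge} runs over $\mathcal{C}_t=(\mathcal{K}_t-L^0_+)\cap L^\infty$, so the candidate gain $h^\ast=\int_t^T\theta^\ast_s(\mu(Y_s)\,ds+\sigma(Y_s)\,dW^1_s)$ must be (after subtracting something nonnegative) essentially bounded; your plan only checks admissibility, $\theta^\ast\in\Theta_t$, via a uniform lower bound and BMO estimates, which controls the gain from below but not from above. In general $h^\ast$ is unbounded above (the optimal $\hat Z^1$ is only BMO), so $h^\ast\notin\mathcal{C}_t$, the infimum is not attained, and the claimed equality $R^{\theta^\ast}_t=\hat R_t$ does not by itself bound the essential infimum over $\mathcal{C}_t$. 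To close this you need an approximation step, e.g.\ truncate ($\min(h^\ast,c)\in\mathcal{C}_t$, or truncate $\theta^\ast$) and prove that the corresponding risks converge to $\hat R_t$, which requires a stability/continuity argument for the quadratic BSDE that your sketch does not supply; this is precisely the ``tedious'' content hidden in the adaptation of \cite[Theorem 7.17]{KS2}. A smaller, repairable point: the comparison step should not be phrased as a comparison theorem for the drivers $g^\theta$ themselves (these are random, merely progressive and possibly of uncontrolled growth in $\theta$); rather, observe that $(R^\theta,\tilde Z^1,Z^2)$ is a supersolution of the BSDE with the convex quadratic driver $-\lambda(Y_s)z_1-\hat g(-\lambda(Y_s),z_2)$, for which comparison is available by Lemma \ref{admissghat}, boundedness of $\lambda$, and convexity; and note that the $-L^0_+$ part of $\mathcal{C}_t$ is absorbed by monotonicity, so that the one-sided bound does hold for all of $\mathcal{C}_t$ and not just for pure trading gains.
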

\begin{proof}
It follows from the work of Kl\"{o}ppel and Schweizer \cite[Theorem
7.17]{KS2} that the risk is given via the BSDE 
\begin{equation*}
\hat{R}_t^{(\xi)} = -\xi + \int_t^T  \tilde{g}(Z_s^1,Z_s^2) \, ds - \int_t^T
\hat{Z}^1_s\, dW_s^1 - \int_t^T \hat{Z}^2_s\, dW_s^2,
\end{equation*}
where $\tilde{g}$ is given by the infimal convolution
\begin{equation}
\tilde{g}(z_1,z_2) := \inf_{v\in \mathbb{R}} \Bigl( g(z_1+\sigma(Y_t)
v, z_2) +\mu(Y_t) v \Bigr). \label{infconv}
\end{equation}
To be precise, besides the differences in sign between our convex risk
measures and their monetary concave utility functionals, our
$L^\infty$ framework is in line with the main part of their paper
where they work in $L^\infty$ . However, the result \cite[Theorem
7.17]{KS2} is stated in the framework of $L^2$-BSDEs with Lipschitz
drivers. Their detour to $L^2$ was due to their consideration that
this is the natural framework for BSDEs. We have motivated that we
have to work with quadratic drivers, for which there is yet no
$L^2$-theory, but it is straightforward (though tedious) to check that
their result \eqref{infconv} adapts to our setting due to the
regularity enforced by the admissibility conditions in Definition
\ref{admiss}.

Next, we rewrite the infimum in \eqref{infconv} to get
\[
\tilde{g}(z_1,z_2) = \inf_{u \in \mathbb{R}} \Bigl( g(u,z_2)
-(z_1-u)\lambda(Y_t)\Bigr)\\ 
= -z_1 \lambda(Y_t)
-\hat{g}\bigl(-\lambda(Y_t), z_2\bigr). 
\]
Finally, the uniqueness of the solution of the BSDE
\eqref{transformBSDE} follows again from \cite[Corollary 6]{BH} using the
convexity of the driver, which is implied by the fact that $\hat{g}$
is concave in the second component. Moreover, this entails also that
$\hat{\varrho}$ is a dynamic convex risk measure.
\end{proof}

\begin{remark}
\begin{enumerate}
\item We remark that in view of equation
\eqref{transformBSDE} of the above proposition, the notion of
admissibility could be slightly extended:  it is possible to replace the
lower bound in Definition \ref{admiss}, iii) by 
\begin{equation*}
 c_1 \bigl(f(z_1) -(1+z_2^2)\bigr) \leq  g(z_1,z_2)
\end{equation*}
for an arbitrary real-valued, convex and $1$-coercive
function $f$. This is enough to get existence and uniqueness of
equation \eqref{transformBSDE}, however it would clearly destroy the
nice invariance property of Lemma \ref{admissghat} and we do not adopt
it in the following.
\item As in \eqref{transformBSDE} and the rest of the paper, the first argument of $\hat{g}$ is the negative of the Sharpe ratio.
If we knew {\em a priori} that the Sharpe ratio was small relative to the minimal Lipschitz constant $L$ in Remark \ref{rem26}, then we could allow BSDEs with Lipschitz drivers. However we choose to put restrictions on the driver, such as in Definition \ref{admiss}(iii), rather than on the range of the model parameters.
\end{enumerate}
\end{remark}

\subsection{Indifference Valuation of European Claims}

From the formula \eqref{indiffdef} for the indifference price $P_t$
and Proposition \ref{riskBSDE}, we have that
\begin{equation}
P_t =  \hat{R}_t^{(0)} - \hat{R}_t^{(\xi)}. \label{indiff2}
\end{equation}
From now on we will restrict ourselves to particular bounded payoffs,
namely European put options with strike price $K$ and maturity date
$T$: $\xi = (K-S_T)^+$. Moreover, for the further treatment the substitutions 
\begin{equation}\label{subst}
x:=\log{(S/K)}, \qquad \tau := T-t,
\end{equation}
will be convenient and we introduce the following notation. Denote by
$L_T$ the layer $[0,T]\times \mathbb{R}^2$ and by $Q_{\tau_0,r}$ the
open cylinder above the disk $B(m,r)$ with midpoint $m$, radius $r$
and height $0<\tau_0 \leq T$: $Q_{\tau_0,r} :=]0,\tau_0[ \times
B(m,r)$. Since the location of the midpoint (once fixed) will play no
further role, we skip it in the notation.

The following theorem characterizes the indifference price of a European
put with respect to the dynamic convex risk measure with driver $g$
under the stochastic volatility model \eqref{SVM}. 

\begin{theorem}\label{pricePDE}
The buyer's indifference price of the European put option is given as
\begin{equation}\label{put}
P(\tau,x,y) = \tilde{u}(\tau,y) - u(\tau,x,y)
\end{equation}
where  $u \in C^{1+\beta/2, 2+\beta}(Q_{T,r}) \cap C(L_T)$ for every
cylinder $Q_{T,r}$ is the solution of the semi-linear parabolic PDE 
\begin{equation}\label{riskPDEin}
\left\{
\begin{array}{rl}
-u_\tau +  L u = &\frac{1}{K}\hat{g}\bigl(
-\lambda(y),\rho' K a(y) u_y\bigr); \\ 
u(0,x,y)  = &-(1-e^x)^+, 
\end{array}
\right. 
\end{equation}
with operator $L$ given by
\begin{align}\label{laplacian}
Lu &= \mathcal{M}u -\frac{1}{2}\sigma^2(y) u_x + \bigl(m(y)- \rho a(y) \lambda(y)\bigr)u_y \nonumber \\
\mathcal{M}u &= \frac{1}{2} \sigma^2(y) u_{xx} +  \rho \sigma(y) a(y) u_{x y}+ \frac{1}{2}a^2(y)u_{yy}
\end{align}
and $\tilde{u}$ denotes the ($x$-independent) solution of \eqref{riskPDEin}, with
altered initial condition $\tilde{u}(0,y) = 0$. Moreover there exists a
solution of the Cauchy problem \eqref{riskPDEin} (as well as that with
the altered initial condition) which is the unique classical solution that
is bounded in $L_T$ together with its derivatives. 
\end{theorem}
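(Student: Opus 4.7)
The plan is to combine the BSDE representation of the post-hedging risk measure (Proposition \ref{riskBSDE}) with the nonlinear Feynman--Kac paradigm, and then address the analytic questions of existence and uniqueness separately. From the indifference price formula \eqref{indiff2} we have $P_t = \hat{R}_t^{(0)} - \hat{R}_t^{(\xi)}$ for $\xi=(K-S_T)^+$. Because $(S,Y)$ is Markov and the driver in \eqref{transformBSDE} depends on $\omega$ only through $Y_s$, we expect a Markovian representation $\hat{R}_t^{(\xi)} = v(t,S_t,Y_t)$. Applying It\^o's formula to $v(t,S_t,Y_t)$ and matching the diffusion against \eqref{transformBSDE} gives $\hat{Z}^1_t = \sigma(Y_t) S_t v_s + \rho a(Y_t) v_y$ and $\hat{Z}^2_t = \rho' a(Y_t) v_y$, while matching drifts yields a semilinear parabolic PDE for $v$. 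The change of variables $x = \log(S/K)$, $\tau = T-t$, together with the rescaling $v = K u$, turns this PDE into exactly \eqref{riskPDEin}, with initial condition read off from the terminal payoff $-K(1-e^{x})^+$. For $\xi=0$ the initial condition is $x$-independent, and since the coefficients of \eqref{riskPDEin} do not depend on $x$, uniqueness will force the associated solution $\tilde{u}$ to be $x$-independent and solve the reduced PDE obtained by dropping the $x$-derivatives.

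The analytic core of the argument is the existence of a classical solution with bounded derivatives, and the main difficulty is that the nonlinearity $\tfrac{1}{K}\hat{g}(-\lambda(y),\rho' K a(y) u_y)$ has quadratic growth in $u_y$, so standard quasilinear parabolic theory does not apply directly. I would proceed by a truncation/approximation scheme: cut off $\hat{g}$ outside a growing ball to obtain drivers $\hat{g}_n$ with globally Lipschitz gradient, solve the corresponding semilinear Cauchy problems using the classical Ladyzhenskaya--Solonnikov--Ural'tseva theory (which applies thanks to the H\"older regularity and uniform non-degeneracy in Assumption \ref{mainassump} and to the regularity of $\hat{g}$ guaranteed by Lemma \ref{admissghat}), and pass to the limit. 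The key \emph{a priori} bounds are a uniform $L^\infty$ estimate on $u_n$ via a comparison argument using the two-sided quadratic bounds in Definition \ref{admiss}(iii), together with a uniform bound on $\nabla u_n$ obtained by a Bernstein-type argument — closely paralleling the exponential transformation used by Kobylanski \cite{Kob} in the BSDE setting to linearize quadratic growth. Interior parabolic Schauder estimates then deliver $C^{1+\beta/2,2+\beta}$ convergence on every cylinder $Q_{T,r}$; continuity up to $\tau=0$ follows from standard barrier constructions at the bounded continuous initial data.

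Uniqueness of the bounded classical solution with bounded derivatives follows from a comparison principle: if $u_1,u_2$ are two such solutions, the mean value theorem applied to $\hat{g}$ (using that $\partial_{z_2}\hat{g}$ is locally Lipschitz and its argument $\rho' K a u_y$ is uniformly bounded) shows that $w = u_1 - u_2$ satisfies a linear parabolic equation with bounded coefficients and vanishing initial data, after which a weighted maximum principle with an exponential cutoff handling the unboundedness of the spatial domain forces $w\equiv 0$. Once $u$ and $\tilde{u}$ are in hand, a final verification via It\^o's formula identifies the processes $K u(T-t,\log(S_t/K),Y_t)$ and $K\tilde{u}(T-t,Y_t)$ as solutions of \eqref{transformBSDE} with terminal data $-\xi$ and $0$ respectively, so by the uniqueness part of Proposition \ref{riskBSDE} they agree with $\hat{R}^{(\xi)}$ and $\hat{R}^{(0)}$, giving \eqref{put}. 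The principal obstacle throughout is the quadratic gradient growth of the nonlinearity, and the Bernstein--Kobylanski exponential transformation is expected to be the decisive tool.
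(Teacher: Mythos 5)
Your proposal is correct and follows the same skeleton as the paper's proof: a Markovian (nonlinear Feynman--Kac) verification via It\^o's formula applied to $v(t,S_t,Y_t)$, identification with $\hat{R}^{(\xi)}$ through the uniqueness part of Proposition \ref{riskBSDE}, the change of variables \eqref{subst} with $u=v/K$ to obtain \eqref{riskPDEin}, and the $x$-independent solution $\tilde u$ for the zero payoff. Where you diverge is the analytic core. The paper does not need your truncation scheme: it simply invokes Ladyzhenskaya--Solonnikov--Ural'tseva \cite[Theorem V.8.1 and Remark V.8.1]{LSU} for the PDE \eqref{vpde}, whose structure conditions admit precisely quadratic (``natural'') growth of the nonlinearity in the gradient --- indeed, the two-sided quadratic bounds in Definition \ref{admiss}(iii) and their preservation under conjugation (Lemma \ref{admissghat}) were set up exactly so that this citation applies directly, and it delivers both existence of a classical solution in $C^{1+\beta/2,2+\beta}(Q_{T,r})\cap C(L_T)$ and uniqueness within the class of solutions bounded together with their derivatives. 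So your statement that ``standard quasilinear parabolic theory does not apply directly'' is a mischaracterization; your truncation of $\hat g$, the $L^\infty$ and Bernstein/Kobylanski-type gradient bounds, Schauder estimates, Arzel\`a--Ascoli, barriers at $\tau=0$, and the linearized comparison argument for uniqueness essentially reconstruct, by hand, what is packaged inside the cited theorem. Your route is sound and more self-contained (and closer in spirit to the BSDE literature \cite{Kob}), at the cost of considerable extra work and of carrying several estimates (uniform gradient bound, weighted maximum principle on the unbounded domain) that the paper gets for free from \cite{LSU}; the paper's route is shorter but leans entirely on the quoted quasilinear theory.
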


\begin{proof}
By Ladyzhenskaya {\it et al.} \cite[Theorem V.8.1 and Remark
V.8.1]{LSU}, there exists a classical solution  $v\in C^{1+\beta/2,
    2+\beta}(Q_{T,r}) \cap C(L_T)$  to the semilinear parabolic PDE
\begin{equation}
v_t + \frac{1}{2} \sigma^2(y)S^2 v_{SS} + \frac{1}{2}a(y)^2v_{yy} +
\rho \sigma(y)a(y)Sv_{Sy} + \bigl(m(y)-\rho a(y)
\lambda(y)\bigr) v_y 
= \hat{g}\bigl( -\lambda(y),\rho' a(y) v_y\bigr), \label{vpde}
\end{equation}
with terminal condition $v(T,S,y)=-(K-S)^+$ which is the unique classical solution to this PDE that
is bounded in $L_T$ together with its (first and second order) derivatives. Applying It\^o's formula
to $v(t, S_t, Y_t)$ and defining
\begin{align*}
\bar{Z}_t^1 &:= \sigma(Y_t)S_t v_s(t,S_t,Y_t) + \rho a(Y_t)v_y(t,S_t,Y_t)\\
\bar{Z}_t^2 &:= \rho' a(Y_t) v_y(t,S_t,Y_t),\\
\bar{R}_t &:= v(t,S_t,Y_t)
\end{align*}
shows that $(\bar{R}_t,\bar{Z}_t^1,\bar{Z}_t^2)$ solves the BSDE
\eqref{transformBSDE} for
$(\hat{R}_t^{(\xi)},\hat{Z}_t^1,\hat{Z}_t^2)$ with $\xi=(K-S_T)^+$,
and therefore we identify $\hat{R}_t^{(\xi)}=v(t,S_t,Y_t)$. The
transformation \eqref{subst}, together with $u(\tau, x,y)=v(t,s,y)/K$
leads to the Cauchy problem \eqref{riskPDEin} for $u$.  Finally,
taking zero terminal condition for the PDE \eqref{vpde}, and calling
the solution $\tilde{v}(t,y)$ leads to
$\hat{R}_t^{(0)}=\tilde{v}(t,Y_t)$. Therefore the indifference
price in \eqref{indiff2} is given by $P_t = \tilde{v}(t,Y_t) -
v(t,S_t,Y_t)$, which, in transformed notation, leads to \eqref{put}.
\end{proof}

Here $u$ is the value function of the holder of the put option, and
$\tilde{u}$ is related to the investment (or Merton) problem with
trading only in the underlying stock and money market account.  The nonlinearity in the PDE \eqref{riskPDEin} is in its ``Greek''
$u_y$, that is, the Vega, and enters through the
Legendre transform of the driver $g$ in its first variable. For the
familiar entropic risk measure, $g(z_1,z_2)=\gamma (z_1^2+z_2^2) /2$,
where $\gamma>0$ is a risk-aversion parameter, we have
$\hat{g}(\zeta,z_2) = (\zeta^2/\gamma - \gamma z_2^2) /2$. In this
case, the nonlinearity is as $u_y^2$ (see for example
\cite{benth,sz}).

\subsection{Implied Volatility PDE}\label{volPDE}
Our main goal is to establish an asymptotic expansion of the
indifference price implied volatility in the limit of short
time-to-maturity.  To do so, we now adapt the approach of Berestycki
et al. \cite{BBF} to establish a PDE satisfied by the Black-Scholes
volatility $I(\tau, x,y)$ implied by the indifference pricing.
Therefore we note that in the Black-Scholes model with unit
volatility, the no arbitrage pricing PDE is given by
\begin{equation*}
\left\{
\begin{array}{rl}
-U_\tau +\frac{1}{2}  \bigl(U_{xx} -U_x \bigr) & = 0; \\
U(0,x) & = (1-e^x)^+,
\end{array}
\right. 
\end{equation*}
which can be represented explicitly as 
\begin{equation*}
U(\tau, x) = \Phi\Bigl(-\frac{x}{\sqrt{\tau}}+
\frac{\sqrt{\tau}}{2}\Bigr) - e^x \Phi\Bigl(-\frac{x}{\sqrt{\tau}} -
\frac{\sqrt{\tau}}{2}\Bigr), 
\end{equation*}
where $\Phi$ is the cumulative density function of the standard normal
distribution. Using the scaling properties of the Black-Scholes put
price, the indifference pricing implied volatility $I(\tau,x,y)$ is
hence given by the equation
\begin{equation}\label{BSinversion}
P(\tau,x,y) = U(I^2(\tau,x,y)\tau, x).
\end{equation}
Before we derive a PDE for the implied volatility, we give some a
priori bounds on indifference prices and their associated implied
volatilities. 

 \begin{proposition}\label{pvolbounds}
   Denote by $P^{BS}(\tau,x;\sigma)$
   the Black-Scholes price of the put calculated with constant
   volatility $\sigma$. Then
\begin{equation}\label{pricebounds}
P^{BS}(\tau,x;\underline{\sigma}) \leq P(\tau,x,y) \leq
P^{BS}(\tau,x;\overline{\sigma}) 
\end{equation}
and
\begin{equation}\label{volbounds}
\underline{\sigma} \leq I(\tau,x,y) \leq \overline{\sigma},
\end{equation}
where $\underline{\sigma}$ and $\overline{\sigma}$ are the volatility
  bounds in Assumption \ref{mainassump}.
\end{proposition}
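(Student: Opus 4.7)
The plan is to reduce the price bounds \eqref{pricebounds} to a parabolic comparison argument on the semilinear Cauchy problem \eqref{riskPDEin}, and then to deduce the volatility bounds \eqref{volbounds} from the strict monotonicity of the Black--Scholes put price in its volatility argument.

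For the price bounds I would introduce, for each $\sigma_\ast \in \{\underline\sigma, \overline\sigma\}$, the candidate
\[
v_{\sigma_\ast}(\tau,x,y) := \tilde u(\tau,y) - P^{BS}(\tau,x;\sigma_\ast),
\]
whose initial condition matches that of \eqref{riskPDEin}: $v_{\sigma_\ast}(0,x,y) = -(1-e^x)^+ = u(0,x,y)$. Since $P^{BS}$ does not depend on $y$, $(v_{\sigma_\ast})_y = \tilde u_y$; inserting $v_{\sigma_\ast}$ into the left-hand side of \eqref{riskPDEin} and combining the Black--Scholes equation $P^{BS}_\tau = \tfrac{1}{2}\sigma_\ast^2(P^{BS}_{xx}-P^{BS}_x)$ with the PDE for $\tilde u$ (which is \eqref{riskPDEin} restricted to $x$-independent functions) yields, after cancellation, the identity
\[
-(v_{\sigma_\ast})_\tau + L v_{\sigma_\ast} - \tfrac{1}{K}\hat g\bigl(-\lambda(y), \rho' K a(y)(v_{\sigma_\ast})_y\bigr) = \tfrac{1}{2}\bigl(\sigma_\ast^2 - \sigma^2(y)\bigr)\bigl(P^{BS}_{xx} - P^{BS}_x\bigr).
\]

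The right-hand side has a definite sign: the Black--Scholes put has strictly positive Gamma, so in log-moneyness $P^{BS}_{xx} - P^{BS}_x = S^2 \partial_{SS} P^{BS} \ge 0$, and combined with $\underline\sigma \le \sigma(y) \le \overline\sigma$ this makes $v_{\overline\sigma}$ a classical sub-solution and $v_{\underline\sigma}$ a classical super-solution of \eqref{riskPDEin}. Since the nonlinearity depends only on $u_y$, applying the mean value theorem to $\hat g$ in its second argument reduces the comparison of each $v_{\sigma_\ast}$ with $u$ to the maximum principle for a uniformly parabolic linear equation with bounded, locally H\"older continuous coefficients (controlled by Assumption~\ref{mainassump}, Lemma~\ref{admissghat}, and the $C^{1+\beta/2, 2+\beta}$ regularity of $u, \tilde u$ from Theorem~\ref{pricePDE}) and a sign-definite source. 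This yields $v_{\overline\sigma} \le u \le v_{\underline\sigma}$, which upon rearrangement is exactly \eqref{pricebounds}.

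The volatility bounds then follow immediately: by \eqref{BSinversion} together with the scaling $U(\sigma^2\tau,x) = P^{BS}(\tau,x;\sigma)$, we have $P(\tau,x,y) = P^{BS}(\tau,x; I(\tau,x,y))$, so the strict monotonicity of $P^{BS}$ in its volatility argument (positive Vega) transfers \eqref{pricebounds} into \eqref{volbounds}. The main obstacle is justifying the maximum principle on the unbounded layer $L_T$; this is handled through the linearization above, whose coefficients and source are uniformly controlled by the admissibility of $g$, the boundedness of $\sigma$ and $a$ in Assumption~\ref{mainassump}, and the a priori boundedness of $u$, $\tilde u$, $P^{BS}$ together with their first derivatives.
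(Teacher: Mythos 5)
Your argument is correct, and it reaches \eqref{pricebounds} by a genuinely more direct route than the paper. The paper works at the level of the price $P$: it first establishes a comparison principle for sub- and superprices of the form $\tilde u - u$ (Lemma \ref{pricecomp}, following Friedman, valid only on bounded cylinders with lateral boundary comparison), and then, because $P$ on the unbounded layer does not come with comparable lateral data, it introduces auxiliary Dirichlet problems $u^{N,\underline\sigma}$, $\tilde u^N$ on expanding cylinders $B(0,N)$ with Black--Scholes lateral data, invokes interior estimates of Ladyzhenskaya et al.\ to extract a subsequence converging to $u$, applies the cylinder comparison to each $P^{N,\underline\sigma}$ (using that $P^{BS}$ written as $0-(-P^{BS})$ is a subprice, for which the two $\hat g$-terms cancel exactly because $P^{BS}$ is $y$-independent), and passes to the limit. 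You instead work at the level of the value function $u$: the explicit barriers $v_{\sigma_\ast}=\tilde u - P^{BS}(\cdot;\sigma_\ast)$ satisfy the same initial condition as $u$, your identity with source $\tfrac12(\sigma_\ast^2-\sigma^2(y))(P^{BS}_{xx}-P^{BS}_x)$ is correct (and the sign claim $P^{BS}_{xx}-P^{BS}_x\geq 0$ holds, e.g.\ since $U_{xx}-U_x=2U_\tau>0$), and since $(v_{\sigma_\ast})_y=\tilde u_y$ the $\hat g$-nonlinearity can indeed be linearized by the mean value theorem, with a bounded coefficient thanks to the boundedness of $u_y,\tilde u_y$ from Theorem \ref{pricePDE}, the boundedness of $\lambda$, $a$, and the regularity of $\hat g$ from Lemma \ref{admissghat}. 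The one point to state carefully is exactly the obstacle you flag: on the unbounded layer you need a Phragm\'en--Lindel\"of-type maximum principle for bounded classical sub/supersolutions of a parabolic equation with bounded coefficients (this is precisely the difficulty the paper sidesteps by truncating to cylinders and taking limits); with boundedness of $u$, $\tilde u$, $P^{BS}$ this is standard, so your route trades the paper's approximation-and-compactness machinery for a global maximum principle, which is arguably cleaner. The final step, deducing \eqref{volbounds} from \eqref{BSinversion} and the strict monotonicity of the Black--Scholes price in volatility, is identical to the paper's.
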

The proof is given in Section \ref{proofs}. 

To derive the PDE for the implied volatility, we plug \eqref{BSinversion} into the
equation \eqref{put} and get after some calculations - detailed in Section \ref{proofs} -  that the implied
volatility $I$ is subject to the nonlinear degenerate parabolic PDE
\begin{equation}
- (\tau I^2)_\tau + \tau I \mathcal{M} I +  I^2 \mathcal{G} \frac{x}{I} - \frac{1}{4} \tau^2 I^2 \mathcal{G}I +\tau q(y) II_y 
= 2\tau  I I_y \frac{\hat{g}\bigl(-\lambda(y),\rho' K
  a(y) \tilde{u}_y \bigr) -
  \hat{g}\bigl(-\lambda(y),\rho' K a(y) u_y \bigr)}{ K
  (\tilde{u}_y - u_y)},\label{Ieqn}
\end{equation}
with the square field type operator
\begin{equation}\label{squarefield}
\mathcal{G}I := \mathcal{M}I^2 - 2I\mathcal{M}I
\end{equation}
and
\begin{equation}\label{q}
q(y) := \rho \sigma(y) a(y) + 2m(y) -2 \rho a(y) \lambda(y).
\end{equation}

To motivate an initial condition for the Cauchy problem, we send
formally $\tau$ to zero in \eqref{Ieqn}. As the ``Vega'' $\vega =
\tilde{u}_y - u_y$ tends also to zero as $\tau\downarrow0$ (this is
shown in Lemma \ref{vanveg}), we observe that the quotient on the right side
of \eqref{Ieqn} 
\[ \frac{\hat{g}\bigl(-\lambda(y),\rho' K
  a(y) \tilde{u}_y \bigr) -
  \hat{g}\bigl(-\lambda(y),\rho' K a(y) u_y \bigr)}{ K
  (\tilde{u}_y - u_y)} \to
\rho'a(y)\hat{g}_{z_2}\bigl(-\lambda(y),0\bigr), \]
which is bounded by the definition of  strictly quadratic drivers and 
Lemma \ref{admissghat}. Dividing by $I^2$, this leads to the formal limit equation
\begin{equation}
\mathcal{G}\frac{x}{I(0,x,y)}=1. \label{initcond}
\end{equation}
\begin{remark}\label{BBFdiff}
Our Cauchy problem is similar to that derived for no arbitrage pricing
implied volatilities in \cite{BBF}, where they have the same equation
\eqref{Ieqn}, but i) without the last term on the left side (which
here is due to the change in measure from physical to a risk-neutral
one); and ii) without the right side term (which here is due to the
dynamic convex risk measure used for indifference pricing). However,
our initial condition \eqref{initcond}, which does not depend on
$\hat{g}$ and the drift of the stochastic volatility model, is exactly
the same as theirs.
\end{remark}

Now we turn this heuristic argument into a precise statement.

\begin{theorem}\label{maintheorem}
The implied volatility function $I(\tau,x,y)$ generated by the
indifference pricing mechanism is the unique solution $I \in
C^{1+\beta/2, 2+\beta}(Q_{T,r}) \cap C(L_T)$ to the following
nonlinear parabolic Cauchy problem
\begin{equation}\label{IPDEind}
- (\tau I^2)_\tau + \tau I \mathcal{M} I +  I^2 \mathcal{G} \frac{x}{I} - \frac{1}{4} \tau^2 I^2\mathcal{G}I +\tau q (y)II_y 
= 2\tau I I_y \frac{\hat{g}\bigl(-\lambda(y),\rho' K
  a(y) \tilde{u}_y \bigr) -
  \hat{g}\bigl(-\lambda(y),\rho' K a(y) u_y \bigr)}{ K
  (\tilde{u}_y - u_y)} 
\end{equation}
where $\mathcal{M}$, $\mathcal{G}$, and $q$ are given by \eqref{laplacian},
\eqref{squarefield}, and \eqref{q} respectively.
The initial condition is given as $I(0,x,y) = x/\psi(x,y)$ where
$\psi$ is the unique viscosity solution of the eikonal equation 
\begin{equation}\label{eikonal}
\left\{
\begin{array}{rl}
\mathcal{G} \psi &=1; \\ 
\psi(0,y) & = 0;\\
\psi(x,y) & > 0 \quad \mbox{ for } x>0.
\end{array}
\right. 
\end{equation}
\end{theorem}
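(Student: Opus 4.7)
The plan is to bootstrap from Theorem \ref{pricePDE} and Proposition \ref{pvolbounds}. We already know the indifference price $P(\tau,x,y)=\tilde{u}(\tau,y)-u(\tau,x,y)$ lies in $C^{1+\beta/2,2+\beta}(Q_{T,r})\cap C(L_T)$ and that $\underline{\sigma}\leq I\leq \overline{\sigma}$. Since the Black-Scholes put $U(\theta,x)$ is smooth and strictly increasing in $\theta=I^2\tau$ on $(0,\infty)$ (its Vega is strictly positive), the implicit function theorem applied to \eqref{BSinversion} transfers the $C^{1+\beta/2,2+\beta}$ regularity of $P$ to $I$ on every $Q_{T,r}$. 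The PDE itself is obtained by differentiating \eqref{BSinversion} in $\tau$, $x$, $y$, using the Black-Scholes relation $U_\theta=\tfrac{1}{2}(U_{xx}-U_x)$ to eliminate the $\theta$-derivatives, and substituting into the PDE for $P$ (the difference of the Cauchy problems \eqref{riskPDEin} for $u$ and $\tilde u$). After collecting terms this reproduces \eqref{Ieqn} exactly as heuristically derived in the text; the detailed algebra is relegated to Section \ref{proofs}.

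For the initial condition, we rewrite \eqref{IPDEind} as
\begin{equation*}
(\tau I^2)_\tau = \tau I\mathcal{M}I + I^2\mathcal{G}\tfrac{x}{I} - \tfrac{1}{4}\tau^2 I^2\mathcal{G}I +\tau q(y)II_y - 2\tau II_y\,\Delta\hat{g},
\end{equation*}
where $\Delta\hat{g}$ denotes the difference quotient on the right-hand side of \eqref{IPDEind}. By Lemma \ref{vanveg} the Vega $\tilde u_y-u_y$ vanishes as $\tau\downarrow0$, so $\Delta\hat{g}\to\rho' a(y)\hat{g}_{z_2}(-\lambda(y),0)$, which is bounded in view of Lemma \ref{admissghat}; hence the explicit factor of $\tau$ annihilates this term in the limit. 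Likewise the $\tau I\mathcal{M}I$, $\tau^2 I^2\mathcal{G}I$, and $\tau q I I_y$ contributions all vanish, while $(\tau I^2)_\tau \to I(0,x,y)^2$. What remains is $I(0,x,y)^2=I(0,x,y)^2\,\mathcal{G}(x/I(0,x,y))$, i.e.\ $\mathcal{G}(x/I(0,x,y))=1$. Setting $\psi(x,y):=x/I(0,x,y)$ recovers the eikonal problem \eqref{eikonal}: the condition $\psi(0,y)=0$ is automatic and $\psi>0$ for $x>0$ follows from the a priori bound $I\geq\underline{\sigma}>0$.

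The most delicate step, and the one I expect to require the most care, is the rigorous passage to the limit $\tau\downarrow0$ and the viscosity characterization of $\psi$. Equation \eqref{IPDEind} is uniformly parabolic for $\tau>0$ but degenerates to a first-order Hamilton-Jacobi equation at $\tau=0$; classical solutions of the limit equation need not exist or be unique, so $\psi$ must be identified as the unique positive viscosity solution of \eqref{eikonal}. Equicontinuity on compacts (supplied by the interior Schauder estimates of \cite{LSU} together with the a priori bounds of Proposition \ref{pvolbounds}) and the standard stability of viscosity solutions under uniform limits ensure that any subsequential limit of $I(\tau,\cdot,\cdot)$ as $\tau\downarrow0$, transferred via $\psi=x/I$, is a viscosity solution of \eqref{eikonal}. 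The eikonal equation here coincides with the one treated in \cite{BBF} (see Remark \ref{BBFdiff}), so its uniqueness theory applies verbatim and pins down the subsequential limit, promoting convergence to the full limit. Uniqueness of $I$ in the stated class then follows from uniqueness of $P$ in Theorem \ref{pricePDE}, the one-to-one Black-Scholes inversion \eqref{BSinversion} for $\tau>0$, and viscosity uniqueness at $\tau=0$.
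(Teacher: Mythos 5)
Your derivation of the interior equation (substituting \eqref{BSinversion} into the price PDE and using the Black--Scholes identities), the regularity/uniqueness remarks, and the formal identification of the limit equation all match what the paper does. The genuine gap is in the step you yourself flag as delicate: the rigorous identification of the initial condition. Your proposed route rests on two assertions that are not substantiated and, as stated, do not work. First, interior Schauder estimates from \cite{LSU} give H\"older control of $I$ only on compact subsets of $\{\tau>0\}$, with constants that blow up as $\tau\downarrow 0$, because \eqref{IPDEind} is degenerate precisely at $\tau=0$; they do not yield equicontinuity of the family $I(\tau,\cdot,\cdot)$ uniformly as $\tau\downarrow0$. Nor can you transfer bounds from $P$ through the implicit function theorem near $\tau=0$: the Black--Scholes vega $U_\theta$ vanishes (exponentially fast for $x\neq0$) as $\theta=I^2\tau\to0$, so the inversion \eqref{BSinversion} loses all Lipschitz control there --- indeed, even the existence of $\lim_{\tau\to0}I(\tau,x,y)$, which you implicitly use when writing $(\tau I^2)_\tau\to I(0,x,y)^2$, is part of what must be proved. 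Second, ``standard stability of viscosity solutions under uniform limits'' is invoked without explaining how a fixed-$\tau$ slice of $I$ is a sub/supersolution of an approximating stationary equation; the time-derivative term $(\tau I^2)_\tau$ has to be handled at contact points with test functions, and your sketch does not address this.

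The paper closes exactly this gap by a different mechanism: it constructs explicit \emph{stationary} barriers $\overline{I}^{\varepsilon,\delta}$, $\underline{I}^{\varepsilon,\delta}$ as solutions of the $\varepsilon$-viscous, $\delta$-shifted equations \eqref{logsubsteq} (after the logarithmic change of variables, solvable by elliptic theory) with boundary data $\overline{\sigma}$, $\underline{\sigma}$, shows they are super- and subsolutions of \eqref{IPDEind} on $Q_{\tau_0,r}$ for small $\tau_0,\varepsilon$, and then sandwiches $I$ via the tailored comparison principle of Lemma \ref{volcomp} (which is proved by mapping back to prices through $U$ and the price comparison Lemma \ref{secondprice}, so no viscosity-theoretic manipulation of $I$ itself is needed). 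Compactness (Arzel\`a--Ascoli) is applied to the barrier families, not to $I$, and the limits $\varepsilon\to0$, then $\delta\to0$, produce viscosity solutions of the perturbed and then the exact eikonal equation, whose uniqueness (from \cite[Section 3.2]{BBF}) forces $\lim_{\tau\to0}I=I^0=x/\psi$ locally uniformly. Without such a barrier-and-comparison construction (or a genuinely justified substitute, e.g.\ half-relaxed limits with a proof that the relaxed limits satisfy the eikonal equation and the boundary conditions in the viscosity sense), your argument for the initial condition --- which is the substance of Theorem \ref{maintheorem} beyond the heuristic already given in Section \ref{volPDE} --- remains incomplete.
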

The proof is given in Section \ref{proofs}. 

It is worthwhile to note that indifference prices are not
linear. Indeed, buying double the amount of securities will not lead
to twice the price. In this way also the volatility implied by
indifference prices is quantity-dependent as a consequence of the
appearance of $u_y$ and $\tilde{u}_y$ on the right side of equation
\eqref{IPDEind}.  However, the nonlinearity in quantity is not
observed in the zeroth- and first-order small time-to-maturity
approximation as we show in the following subsection.  Moreover,
deriving the PDE for the indifference price implied volatility for
buying $n$ put options results in the same Cauchy problem as in
Theorem \ref{maintheorem}, where one has only to replace $K$ by $nK$
in every appearance in equation \eqref{IPDEind}.

\subsection{Small-Time Expansion\label{expansions}}
In the short time limit the implied volatility under indifference
pricing is equal to the usual one as calculated by Berestycki
e.a. \cite{BBF}, as the initial conditions are the same (see Remark
\ref{BBFdiff}). The subtleties of the indifference pricing appear
only away from maturity. This can be seen by methods of a formal small-time expansions. We make the {\it Ansatz} of an asymptotic expansion of the
implied volatility:
\begin{equation}\label{expansion}
I(\tau,x,y) = I^0(x,y)\bigl(1 + \tau I^1(x,y) + O(\tau^2)\bigr).
\end{equation}
As seen above, the term $I^0(x,y) = I(0,x,y)$ is given via solution of the eikonal
equation \eqref{eikonal}. To find the PDE for $I^1$, we plug the
expansion \eqref{expansion} into the equation \eqref{IPDEind} and
compare the first order terms for $\tau \to 0$. This leads to the
inhomogeneous linear transport equation
\begin{align}\label{firstPDE}
&2 I^1(x,y)+ \frac{\psi}{2}\Bigl( \sigma^2(y) \psi_x + \rho \sigma(y) a(y) \psi_y\Bigr)I^1_x(x,y) + \frac{\psi}{2}\Bigl( \rho \sigma (y) a(y) \psi_x + a^2(y) \psi_y \Bigr) I^1_y(x,y) -\frac{\psi}{x} \mathcal{M}\frac{x}{\psi} \nonumber\\
 = &- \Bigl(q(y)  - \rho'a(y) \hat{g}_{z_2}\bigl(-\lambda(y),0\bigr)\Bigr)  \frac{\psi_y}{\psi}
\end{align}
where $\psi$ is again the solution of the eikonal equation
\eqref{eikonal}.

It is important to observe that the dependence of
this first order approximation on the risk measure (via its driver
$g$) is given merely by the evaluation at $z_2 =0$ of the derivative
of its Fenchel-Legendre transform $\hat{g}$ with respect to the second
component.

Comparing our PDE to the analogous equation in the arbitrage-pricing setting of \cite{BBF} (who, however, prescribe only the methodology in general and make explicit calculations just in one example), we note the presence of two  additional  terms in  \eqref{firstPDE} - one due to the change to a risk-neutral probability measure and the second due to indifference pricing with a dynamic convex risk measure.

Furthermore in many we can obtain an interior boundary condition for the PDE
at $x=0$ by sending $x$ formally to zero in \eqref{firstPDE}, since by $\psi \to0$ and $\psi_x$ it follows that the first order coefficients vanish. We will show this specific procedure on a concrete example in Section \ref{hwmodel}. Imposing
higher regularity on the coefficients and on $\hat{g}$ one can obtain
also higher order terms in the expansion of the implied
volatility. This is done by using Taylor expansions of $\hat{g}$ (in
the second component) and $\tilde{u}_y$ and $u_y$ (in
$\tau$).

\section{Examples \& Computations}\label{ec}
In this Section, we introduce a family of dynamic risk measures within which to present the effect of risk aversion on implied volatilities, first using the asymptotic approximation in the Hull-White stochastic volatility model, and later using a numerical solution of the quasilinear option pricing PDE.

\subsection{Distorted Entropic Dynamic Convex Risk Measures}
To study the impact
of the driver on the implied volatility, we will now turn to a nicely
parametrized family of risk measures. Therefore we define the
following class of drivers, generating distorted forms of the entropic
risk measure: 
\begin{equation}
g^{\eta, \gamma}(z_1,z_2) := \frac{\gamma}{2}(z_1^2 + z_2^2) +
\eta \gamma z_1 z_2 + \frac{\eta^2 \gamma}{2}z_2^2 =
\frac{\gamma}{2}\Bigl((z_1+\eta z_2)^2 +z_2^2\Bigr). \label{distorted} 
\end{equation}
It is clear, that in the case $\eta = 0$ this is the driver connected
to the classical entropic risk measure, whereas $\eta$ can be seen as
a parameter which describes in which way volatility risk increases
also the risk coming from the tradable asset. As we will see later in Section \ref{numerical}, $\eta$ effectively plays the role of a {\it volatility risk premium}. In the case of the
usual entropic risk measure the driver describes a circle
whose radius is governed by the parameter $\gamma$. In the
distorted case it is now an ellipse where $\eta$
determines additionally the eccentricity.

Turning to the Fenchel-Legendre transform, we have
\begin{equation}
\hat{g}^{\eta, \gamma}(\zeta,z_2) = \frac{1}{2 \gamma}\zeta^2 -
\frac{\gamma}{2} z_2^2 -\eta \zeta\, z_2. \label{ghat}
\end{equation}
Plugging this into \eqref{IPDEind}, we see that the right hand side now reads
\begin{equation*}
\tau I I_y \Bigl( 2 \eta \rho' \lambda(y) a(y) - \gamma K {\rho'}^2a(y)^2(\tilde{u}_y + u_y)\Bigr)
\end{equation*}
and we remark in particular that $\gamma$ scales with $K$ and hence also with the number of securities bought (as mentioned at the end of Section \ref{volPDE}). In particular we see again that the term appearing in the first order approximation of the implied volatility, $\hat{g}^{\eta, \gamma}_{z_2}(-\lambda(y),0)= \eta \lambda(y)$, is independent of $\gamma$.

\subsection{Short-Time Asymptotics for the Hull-White Model}\label{hwmodel}
In the following we look at an example which is an adaption of Example
6.1/6.3 of \cite{BBF}. Let the stochastic volatility
model be given as the Hull-White model 
\begin{equation}\label{SVMEx}
\left\{
\begin{array}{lll}
dS_t &= \mu (Y_t) S_t \,dt +  Y_t S_t \, dW_t^1, \qquad & S_0 = s; \\
dY_t & = \kappa Y_t \, dW_t^2, \qquad & Y_0 = y.
\end{array}
\right. 
\end{equation}
for two independent Brownian motions $W^1$, $W^2$. 
Obviously the model does not fall in the class considered above
because the volatility $\sigma(Y)=Y$ is a geometric Brownian motion
that is not bounded above or away from zero. Nevertheless, by a change
of variables we will see that the results hold. 

Writing down the pricing PDE in the case of the distorted entropic risk measure \eqref{distorted}, we get
\begin{equation*}
-u_\tau + \frac{1}{2}\bigl(y^2 u_{xx} + \kappa^2y^2 u_{yy}\bigr) - \frac{1}{2}  y^2 u_x = \frac{1}{2 \gamma K} \frac{\mu(y)^2}{y^2}- \frac{\gamma K \kappa^2}{2}y^2u_y^2 + \eta \kappa \mu(y) u_y,
\end{equation*}
and one sees that by the time change
$\tau \mapsto \tau y^2$ (the boundary $y= 0$ is not hit when we start with $y_0 >0$ given that $Y_t$ is a geometric Brownian motion) and setting $\tilde{\mu}(y) := \mu(y) / y^2$ the equation becomes 
\begin{equation*}
-u_\tau + \frac{1}{2}\bigl( u_{xx} + \kappa^2 u_{yy}\bigr) - \frac{1}{2}  u_x = \frac{1}{2 \gamma K } \tilde{\mu}(y)^2- \frac{\gamma K \kappa^2}{2} u_y^2 + \eta \kappa\tilde{ \mu}(y) u_y.
\end{equation*}
This equation has a solution (again by  \cite[Theorem V.8.1 and Remark
V.8.1]{LSU}), at least in the case that
$\tilde{\mu}$ is locally $\beta$-H\"{o}lder continuous
(which in turn implies that $\mu(y) = O(y^2)$ as $y \to 0$) .

In absence of global bounds on the volatility we are not able to derive results about existence and uniqueness of solutions for the PDE \eqref{IPDEind} for the implied volatility. Nevertheless we can postulate the small-time expansion to get
\begin{equation}\label{HWeikonal}
\left\{
\begin{array}{rll}
\psi_x^2 + \kappa^2 \psi_y^2 &= 1/y^2\\
\psi(0,y) & = 0\\
\psi(x,y) & >0 \quad \mbox{for } x>0
\end{array}
\right. 
\end{equation}
as the PDE characterizing it's zeroth oder term and
\begin{equation}\label{FO}
\left\{
\begin{array}{lll}
0 &= 2I^1 + y^2 \psi \psi_x I^1_x +\kappa^2y^2\psi \psi_y
I^1_y -\frac{x}{\psi} \mathcal{M} \frac{\psi}{x}  - \eta \kappa \mu(y) \frac{\psi_y}{\psi} \\ 
I^1(0,y) & = \frac{\kappa^2}{12} + \eta \frac{\mu(y)}{2y}
\end{array}
\right. 
\end{equation}
for the first. The interior boundary condition for $I^1$ at $x=0$ follows from the formal asymptotics as $x \to 0$. As derived in \cite{BBF}, the zeroth order term of the expansion is given via the solution of \eqref{HWeikonal}, 
\begin{equation}
\psi(x,y) = \frac{1}{\kappa} \ln{\biggl(\frac{\kappa x }{y} +
  \sqrt{1+\frac{\kappa^2 x^2}{y^2}}\biggr)}. \label{psiformula}
\end{equation}
as $I^0(x,y) = x/\psi(x,y)$, whereas for \eqref{FO} we can guarantee only a solution in the case where $\mu(y)=O(y^3)$ since
$I^0_y/I^0 = -\psi_y/\psi = o(1/y)$ as $y \to 0$. Obviously this means practically that we need an extreme drift in the Hull-White model to compensate the very
volatile volatility process.
However, setting  e.g. $\mu(y) = \mu y^3$ for some constant $\mu$, we are able to solve the PDE \eqref{FO} explicitly by the method of characteristics to get
\begin{equation}
I^1(x,y) = \frac{1}{\psi^2(x,y)}\Biggl(\ln{\biggl(\frac{
    y}{x}\psi(x,y) \Bigl(1+\frac{\kappa^2 x^2}{y^2}\Bigr)^\frac{1}{4}\biggr)} + \eta \frac{ \mu x^2}{2}
\Biggr) . \label{I1formula}
\end{equation}

In Figure \ref{fig:IS}, we rely on the parameter set
\begin{equation*}
\mu=6; \qquad \kappa= 7; \qquad y = 0.3; \qquad \tau = 0.1. 
\end{equation*}

\begin{figure}[htb]
\begin{center}
\includegraphics[width=0.8\linewidth]{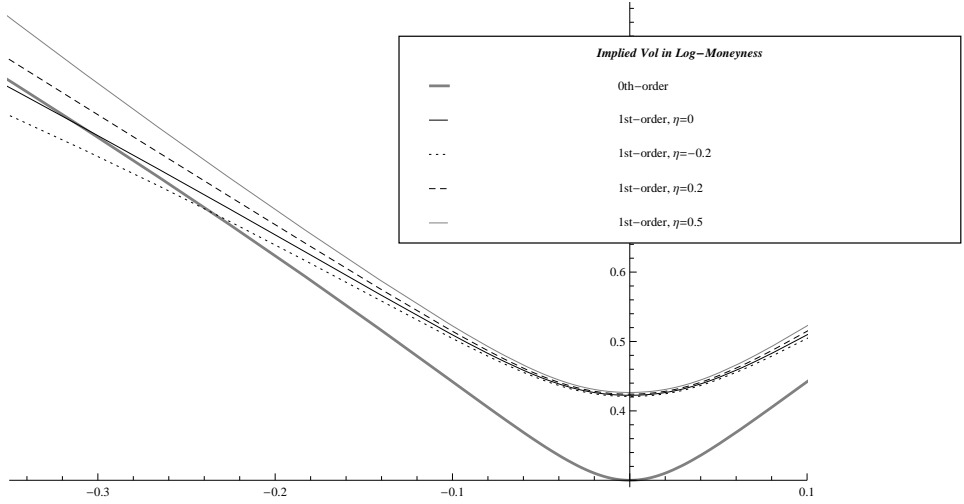}
\caption{Implied volatility in terms of log-moneyness $\log(K/S_0)$ for the Hull-White model: zeroth
  and first order approximation in dependence of $\eta$.}
\label{fig:IS}
\end{center}
\end{figure}

Whereas the parameter $\gamma$ does not appear in the first order
approximation, the distortion parameter $\eta$ has a double effect. On
the one hand side it shifts the smile at the money a small amount, on the other hand
it changes more significantly the wing behavior of the smile, adding to the
asymptotics the term $\eta \frac{\mu
  \kappa^2}{2}\frac{x^2}{(\ln{\vert x \vert})^2}$ (since
$x^2/\psi^2 \sim k^2 x^2/(\ln{\vert x\vert})^2$ as $x \to \pm
\infty$). This changes the whole wing behavior, since $I^0 \sim \kappa
\frac{\vert x \vert}{\ln{\vert x\vert}}$ and $I^1 \to 0$ for $\eta =
0$ as $x \to \pm \infty$. Of course $\eta=0$ corresponds to the first
order term of martingale pricing as \cite{BBF}. Positive $\eta$ (hence
a positive impact of the volatility risk on the risk of the traded
asset) increases the implied volatility and steepens the wings.

\subsection{Calibration Exercise\label{calib}}
We perform a simple calibration exercise using S\&P 500 implied volatilities to show how the approximation \eqref{expansion} could be used to recover some information about the market's risk measure. We work with the short time approximation in the uncorrelated Hull-White model \eqref{SVMEx} of Section \ref{hwmodel} and within the family of distorted dynamic convex risk measure \eqref{distorted}. 

We suppose that the near-the-money option prices are given by expectations under an equivalent martingale measure, specifically the minimal martingale measure. This corresponds to setting the distortion $\eta=0$, and so the short-time asymptotic approximation for implied volatilities $I$ is
\[ I \approx \frac{x}{\psi(x,y)}\left(1+\tau I^1(x,y)\right),
\]
where $\psi(x,y)$ is given by \eqref{psiformula} and $I^1(x,y)$ by \eqref{I1formula} with $\eta=0$. 
This gives
\[ I\approx I_M:=\frac{x}{\psi(x,y)}\left(1+\frac{\tau}{\psi^2(x,y)}\ln{\left(\frac{
    y}{x}\psi(x,y) \left(1+\frac{\kappa^2 x^2}{y^2}\right)^\frac{1}{4}\right)} \right),\]
where we see that the approximation $I_M$ depends only on the parameter $\kappa$ and the current volatility level $y$. Then having fit the liquid implied volatilities to recover estimates of $\kappa$ and $y$, we can write the short time approximation with distortion as 
\begin{equation}
 I\approx I_M +(\mu\eta)\frac{\tau x^3}{2\psi^3(x,y)}, \label{etafit}
 \end{equation}
from which we can estimate the combination $\mu\eta$.

In Figure \ref{calibfit}, we show the result of a fit to S\&P 500 implied volatilities from options with nine days to maturity on October 7th, 2010. We comment that the Hull-White model is not the best choice of stochastic volatility model, and our intention is primarily to demonstrate the procedure. In particular, because our asymptotic formulas are for the uncorrelated model, which always has a smile with minimum implied volatility at the money, we restrict the fit to options with negative log-moneyness, which is mainly out-of-the-money put options. The first part of the least-squares fitting to options with log-moneyness between $-0.06$ and $0$ gives $\kappa\approx 6.6$, and $y\approx0.18$. Then fitting to \eqref{etafit} the options with log-moneyness less than $-0.06$, which we view as less liquid and more reflective of the market's risk measure, gives the estimate $\mu\eta\approx35$. We note that the $\mu$ in this model is expected to be rather large because the composite drift of the stock is $\mu y^3$, and $y<<1$. So the estimate we get from the data of $\mu\eta$ is not unreasonable. In particular it reveals a {\em positive} distortion coefficient $\eta$.
\begin{figure}[htb]
\begin{center}
\includegraphics[width=1.0\linewidth]{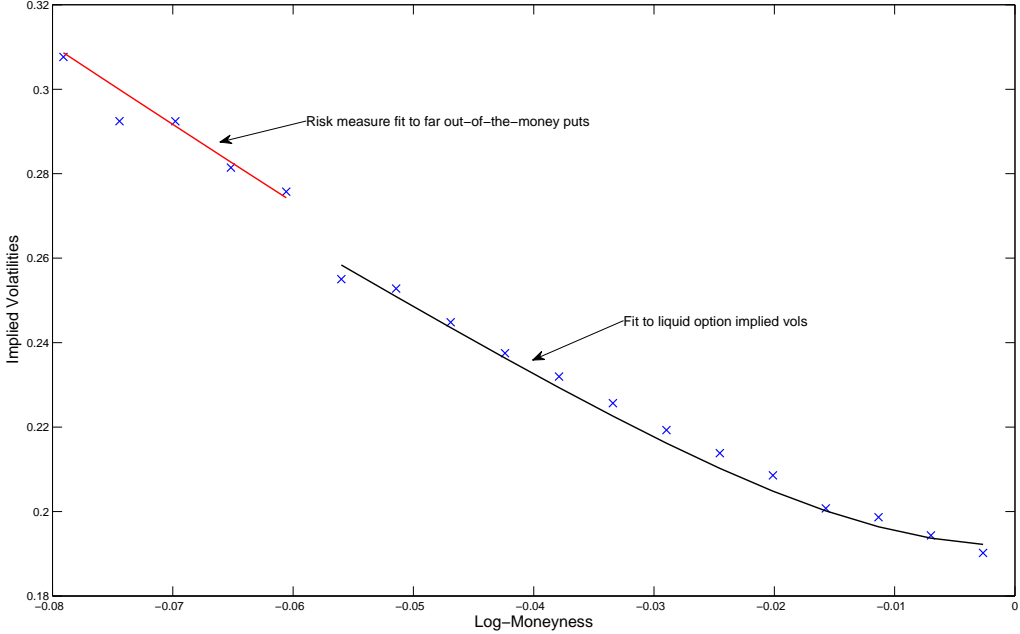}
\caption{Fit of short-time asymptotics to S\&P 500 implied volatility data $9$ days from maturity.}
\label{calibfit}
\end{center}
\end{figure}

\subsection{Numerical Study\label{numerical}}
We consider the buyer's indifference price  of one European put option with respect to the family of {\it distorted entropic risk measures} defined by \eqref{distorted}. We work within the stochastic volatility model \eqref{SVM} and, for the numerical solution, we return to the primitive variables $(t,S,y)$. Denote by $\lop_y$ the generator of the Markov process $Y$:
\[ \lop_y = \half a(y)^2 \frac{\pa^2}{\pa y^2} + m(y) \frac{\pa}{\pa y}, \]
and by $\lop_{S,y}$ the generator of $(S,Y)$:
\[ \lop_{S,y} = \lop_y + \half\sigma(y)^2S^2\frac{\pa^2}{\pa S^2} + \rho a(y)\sigma(y)S\frac{\pa^2}{\pa S\pa y} + \mu(y)\frac{\pa}{\pa S}. \]
From \eqref{indiff2}, the buyer's indifference price of a put option with strike $K$ and expiration date $T$ at time $t<T$ when $S_t=S$ and $Y_t=y$, is given by:
\[ P(t,S,y) = \phi(t,S,y) - \phi_0(t,y), \]
where $\phi$ solves
\begin{eqnarray}
\phi_t + \left(\lop_{S,y} - \rho a(y)\lambda(y)\frac{\pa}{\pa y}\right)\phi &=& -\hat{g}(-\lambda(y),-\rho'a(y)\phi_y), \label{phiPDE} \\
\phi(T,S,y) & = & (K-S)^+,\nonumber
\end{eqnarray}
and $\phi_0(t,y)$ solves the same PDE without the $S$-derivatives and with zero terminal condition. Note that $\phi=-v$, where $v$ was the solution to the PDE problem in \eqref{vpde}, and $\phi_0=-\tilde{v}$ which was introduced in the proof of Theorem \ref{pricePDE}. 

As $\hat{g}$ is given by \eqref{ghat}, we can re-write \eqref{phiPDE} as 
\begin{equation} 
\phi_t + \left(\lop_{S,y} - (\rho +\eta\rho')a(y)\lambda(y)\frac{\pa}{\pa y}\right)\phi = -\frac{\lambda^2(y)}{2\gamma} + \half(1-\rho^2)\gamma a(y)^2\phi_y^2. \label{phiPDE2}
\end{equation}
This shows that $\eta$ plays the role of a {\it volatility risk premium} in that it enters as a drift adjustment for the volatility-driving process $Y$. However the nonlinearity of the PDE is through a quadratic term in $\phi_y$, as in the case of the entropic risk measure. 

Moreover, introducing the transformation
\[ \phi_0(t,y) = -\frac{1}{\gamma(1-\rho^2)}\log f(t,y), \]
leads to the {\it linear} PDE problem for $f$:
\begin{eqnarray}
f_t + \left(\lop_{y} - (\rho +\eta\rho')a(y)\lambda(y)\frac{\pa}{\pa y}\right)f - \half\lambda^2(y)(1-\rho^2)f & = & 0,\label{fPDE}\\
f(T,y) & = & 1.
\end{eqnarray}
Therefore the indifference price is given by
\[ P(t,S,y) = \phi(t,S,y) + \frac{1}{\gamma(1-\rho^2)}\log f(t,y). \]

In the numerical solutions,  we take the volatility-driving
process $(Y_t)$ to be an Ornstein-Uhlenbeck process with the dynamics:
\[ dY_t = \alpha(m-Y_t)\,dt + \nu\sqrt{2\alpha}\,\bigl(\rho \, dW_t^1 +\rho'\,
dW_t^2\bigr), \]
and we choose a function $\sigma(Y)$ that gives realistic volatility
characteristics. For the OU process $(Y_t)$, the rate of mean-reversion is $\alpha$, the long-run
mean-level is $m$ and the long-run variance is
$\nu^2$. For the computations, we will take $\alpha=5$, $m=0$, $\nu^2=1$, $\rho=-0.2$ and
\[ \sigma(y) = \frac{0.7}{\pi}(\arctan (y-1)+\pi/2) + 0.03, \]
so that $\sigma(m)=0.2050$. The parameter $\nu$ measures
approximately the standard deviation of \vol fluctuations. The
values are chosen such that the interval of width one standard deviation
for $Y$ under its stationary distribution is $(-1,1)$, and this translates roughly to the interval $(0.13,0.38)$ for volatility $\sigma$. The two
standard deviation interval for \vol is approximately
$(0.10,0.56)$.

We first solve the quasilinear PDE \eqref{phiPDE2} for $\phi$ using implicit finite-differences on the linear part, and explicit on the nonlinear part. Then we solve the linear PDE problem \eqref{fPDE} for $f$. This procedure is done for a current stock price $S_0=100$ and $\sigma(Y_0)=0.223$, calculating the solutions for European put options expiring in three months for strikes  $K$ in the range $[70,110]$ for different values of the distortion parameter $\eta$ and the risk-aversion parameter $\gamma>0$.

Figure \ref{fig:numerical1} reveals a more complex picture regarding the effect of $\eta$ away from the short maturity asymptotic approximation. We see, as in Figure \ref{fig:IS} from the asymptotics, increasing $\eta$ increases the skew slope; however it also shifts down the levels of implied volatility around the money (as opposed to the opposite effect we saw in Figure \ref{fig:IS}). 

Figure \ref{fig:numerical2} shows, as we would expect, that increasing risk aversion $\gamma$ decreases the implied volatility skew which comes from the indifference price of the buyer who is willing to pay less for the risk of the option position. It also has a secondary effect of flattening the skew out of the money.

\begin{figure}[htbp]
\begin{center}
\includegraphics[width=0.6\linewidth]{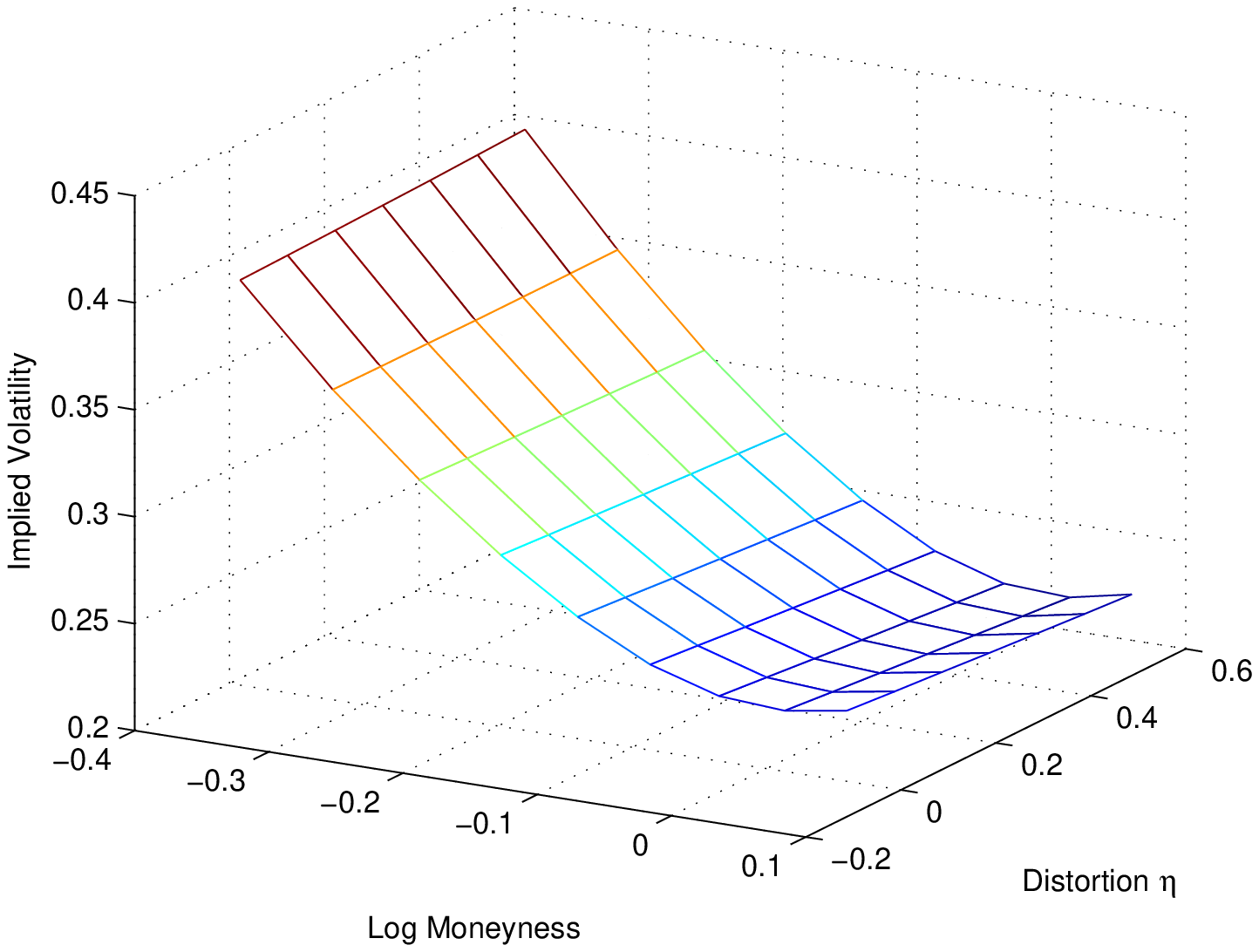}\includegraphics[width=0.5\linewidth]{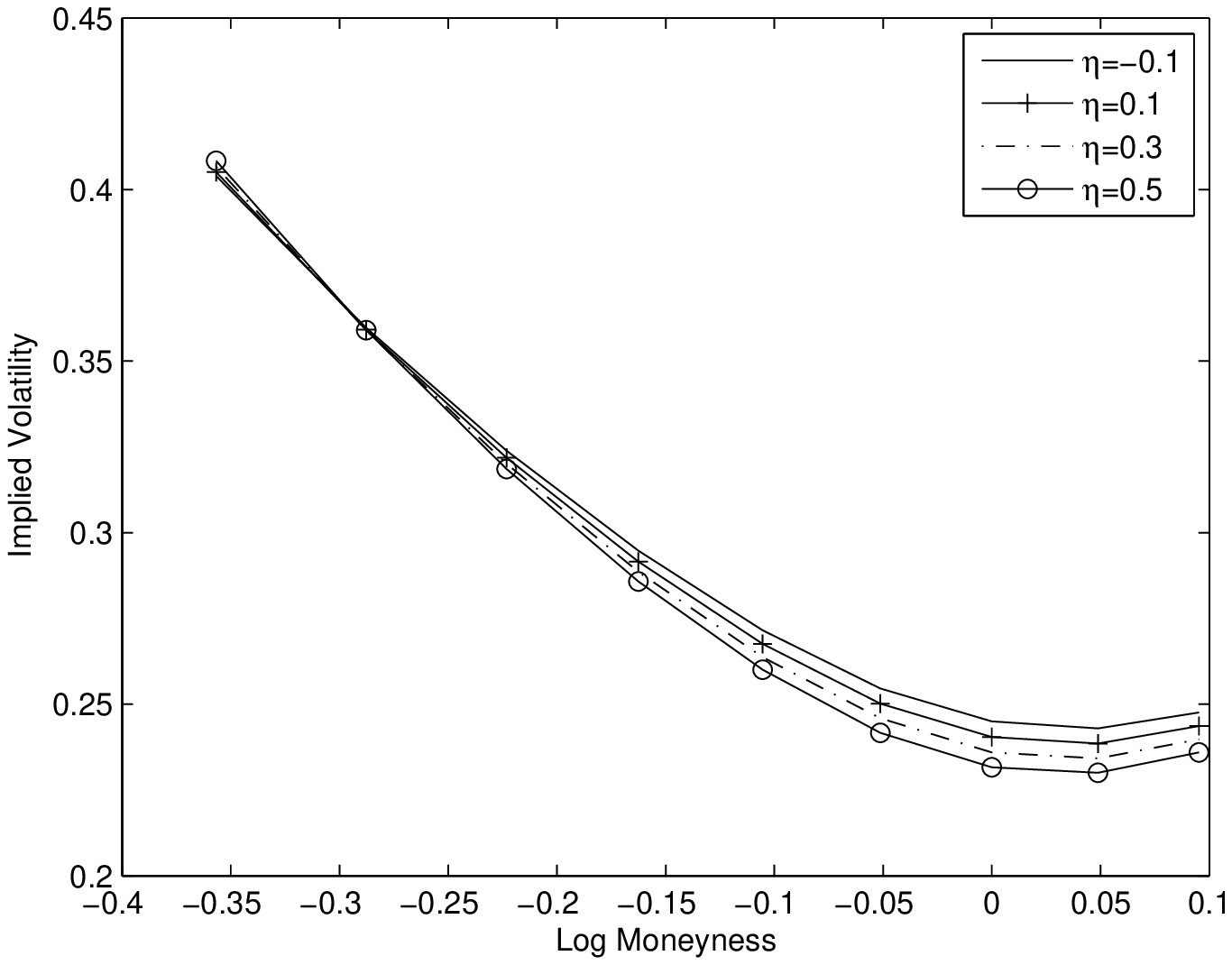}
\caption{Implied volatility from the arctangent stochastic volatility model in terms of log-moneyness $\log(K/S_0)$ and $\eta$ with fixed $\gamma=0.5$.} 
\label{fig:numerical1}
\end{center}
\end{figure}

\begin{figure}[htbp]
\begin{center}
\includegraphics[width=0.6\linewidth]{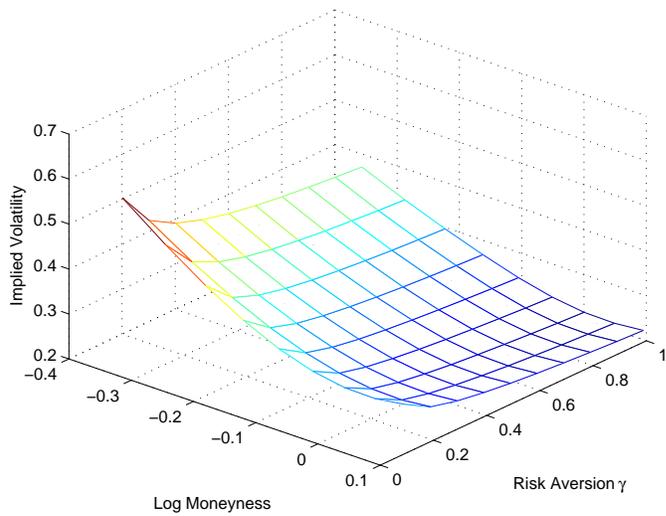}\includegraphics[width=0.5\linewidth]{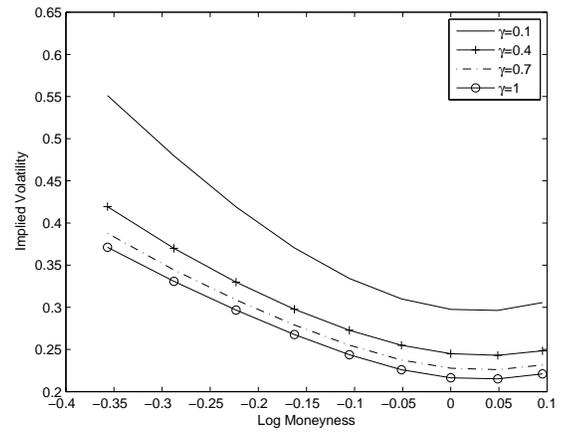}
\caption{Implied volatility from the arctangent stochastic volatility model in terms of log-moneyness $\log(K/S_0)$ and $\gamma$ with fixed $\eta=0.2$.} 
\label{fig:numerical2}
\end{center}
\end{figure}

\section{Conclusion}\label{conc}
We have derived a nonlinear PDE for the implied volatility from
indifference pricing with respect to dynamic convex risk measures
defined by BSDEs under diffusion stochastic volatility models. 
Our asymptotic analysis has highlighted the principal effect of the
risk measure on option implied volatility at short maturities, namely
through the appearance of $\hat{g}_{z_2}$ in the first order
correction solving \eqref{firstPDE}. 

In the example of Section \ref{hwmodel}, this translates explicitly to
a steepening effect on the implied volatility smile from the
distortion parameter $\eta$. Numerical computations confirm this away from short maturity too, as well as quantifying the effect of risk aversion on the level of implied volatilities.

The analysis can be used to infer some information
about the driver, for example $\eta$ and $\gamma$ in the family \eqref{distorted},
from market implied volatilities. This is illustrated using the short time asymptotics for the Hull-White model on S\&P 500 implied volatilities in Section \ref{calib}.

\clearpage

\section{Proofs}\label{proofs}

\paragraph{Price- and Volatility Bounds}

\begin{lemma}[{\bf First Price Comparison}]\label{pricecomp}
Suppose that $\underline{P}= \tilde{\underline{u}} - \underline{u}$
and $\overline{P}= \tilde{\overline{u}} - \overline{u}$ for some
$\underline{u}$, $\tilde{\underline{u}}$, $\overline{u}$,
$\tilde{\overline{u}} \in  C^{1, 2,}(Q_{\tau_0,r}) \cap
C(\overline{Q_{\tau_0,r}}) $, $0<\tau_0 \leq T$, satisfy 
\begin{align}
\overline{P}_\tau &\geq  L \overline{P}
-\frac{1}{K}\Bigl(\hat{g}\bigl( -\lambda(y),\rho' K
a(y)\tilde{ \overline{u}_y}\bigr) -\hat{g}\bigl(
-\lambda(y),\rho' K a(y) \overline{u}_y\bigr)\Bigr),
\quad \mbox{ in }  Q_{\tau_0,r}\label{superint}\\ 
\underline{P}_\tau &\leq  L \underline{P}
-\frac{1}{K}\Bigl(\hat{g}\bigl( -\lambda(y),\rho' K
a(y) \tilde{\underline{u}_y}\bigr) -\hat{g}\bigl(
-\lambda(y),\rho' K a(y) \underline{u}_y\bigr)\Bigr),
\quad \mbox{ in }  Q_{\tau_0,r}  
\end{align}
as well as
\begin{equation}\label{superbound}
\overline{P}\geq\underline{P} \quad \mbox{ on }\quad \{0\} \times
\mathbb{R}^2 \cap \overline{Q_{\tau_0,r}}\quad \mbox { and }\quad
]0,\tau_0] \times \mathbb{R}^2 \cap \partial Q_{\tau_0,r} . 
\end{equation}
Then $\overline{P}\geq\underline{P}$ on $Q_{\tau_0,r}$.
\end{lemma}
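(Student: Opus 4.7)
My strategy is a parabolic maximum principle applied to $w:=\underline P-\overline P$, made possible by first linearizing the nonlinear differences of $\hat g$ that appear on the right-hand sides. Since $\hat g\in C^{2,1}$ (from Definition~\ref{admiss} and Lemma~\ref{admissghat}) and the four auxiliary functions are $C^{1,2}$ on $\overline{Q_{\tau_0,r}}$, the integral mean value theorem gives, for each pair,
\begin{equation*}
\hat g\bigl(-\lambda(y),\rho'Ka(y)\tilde u_y\bigr)-\hat g\bigl(-\lambda(y),\rho'Ka(y)u_y\bigr)=\rho'Ka(y)(\tilde u_y-u_y)\int_0^1 \hat g_{z_2}\bigl(-\lambda(y),\rho'Ka(y)[u_y+s(\tilde u_y-u_y)]\bigr)\,ds.
\end{equation*}
Applied to $(\overline u,\tilde{\overline u})$ and $(\underline u,\tilde{\underline u})$ respectively, this recasts the hypothesized inequalities as \emph{linear} parabolic inequalities
$\overline P_\tau\geq L\overline P-\rho'a(y)\bar c(\tau,x,y)\,\overline P_y$ and $\underline P_\tau\leq L\underline P-\rho'a(y)\underline c(\tau,x,y)\,\underline P_y$, with coefficients $\bar c,\underline c$ continuous and uniformly bounded on the compact cylinder $\overline{Q_{\tau_0,r}}$.

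Subtracting these two inequalities and rewriting the first-order term as $\bar c\overline P_y-\underline c\underline P_y=-\bar c\,w_y+(\bar c-\underline c)\underline P_y$, I obtain
\begin{equation*}
w_\tau-Lw+\rho'a(y)\bar c(\tau,x,y)\,w_y\leq \rho'a(y)\bigl(\bar c(\tau,x,y)-\underline c(\tau,x,y)\bigr)\underline P_y.
\end{equation*}
I would then set $\tilde w(\tau,x,y):=e^{-\lambda_0\tau}w(\tau,x,y)-\epsilon\tau$, with $\lambda_0>0$ large and $\epsilon>0$ small. The boundary hypothesis \eqref{superbound} gives $\tilde w\leq 0$ on the parabolic boundary of $Q_{\tau_0,r}$. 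Assuming for contradiction that $\sup_{\overline{Q_{\tau_0,r}}}\tilde w>0$, the supremum is attained at some interior point $(\tau^*,x^*,y^*)$, where $\tilde w_\tau\geq 0$, $\tilde w_x=\tilde w_y=0$, and the spatial Hessian is nonpositive; unwinding the weight yields $w_y=0$ (hence $\overline P_y=\underline P_y$), $Lw\leq\lambda_0 w$, and $w_\tau\geq\lambda_0 w+\epsilon$ at that point.

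The main obstacle is the residual $\rho'a(\bar c-\underline c)\underline P_y$: because $\bar c$ and $\underline c$ are built from different pairs of auxiliary functions, it does not vanish when $w$ does, unlike in the standard semilinear comparison. However, at the interior extremum the identity $\overline P_y=\underline P_y$, together with the joint $C^{1,2}$-bounds on the auxiliary functions and continuity of $\hat g_{z_2}$, ensures that $\bar c-\underline c$ is uniformly bounded on $\overline{Q_{\tau_0,r}}$; likewise $\underline P_y$ is uniformly bounded. Choosing $\lambda_0$ larger than a universal constant depending on these sup-norms and on the coefficients of $L$, the inequality for $w$ evaluated at $(\tau^*,x^*,y^*)$ contradicts $w_\tau\geq\lambda_0 w+\epsilon>0$. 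This forces $\tilde w\leq 0$ throughout $\overline{Q_{\tau_0,r}}$; letting $\epsilon\downarrow 0$ gives $w\leq 0$, i.e.\ $\overline P\geq\underline P$ on $Q_{\tau_0,r}$, as claimed.
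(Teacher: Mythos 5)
Your overall strategy---mean-value linearization of the $\hat g$-differences followed by a weighted parabolic maximum principle, with a small perturbation to pass between strict and non-strict inequalities---is in the same spirit as the paper's proof, which invokes Friedman's comparison theorem for parabolic operators whose nonlinearity involves no second derivatives and then uses the perturbation $\overline P^\varepsilon=\overline P+\varepsilon(1+\tau)$. You have also correctly identified the real subtlety of this lemma: the right-hand sides are built from the constituent functions $\overline u,\tilde{\overline u}$ and $\underline u,\tilde{\underline u}$ \emph{separately}, so after linearization the two inequalities carry different coefficients $\bar c\neq\underline c$, and subtraction leaves the residual $\rho'a(y)\bigl(\bar c-\underline c\bigr)\underline P_y$.

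However, your way of disposing of that residual fails, and this is a genuine gap. The residual is a source term that is merely bounded; it is not of the form (bounded coefficient)$\,\times\, w$ or (bounded coefficient)$\,\times\,\nabla w$, and it does not vanish at the interior maximum of $\tilde w$: the relation $w_y=0$, i.e.\ $\overline P_y=\underline P_y$, only says the two segments over which $\hat g_{z_2}$ is averaged have equal increments, not that they coincide, so $\bar c-\underline c$ need not be zero or small there. Hence at the maximum point your inequality reduces to $\lambda_0\, w(\tau^*,x^*,y^*)+\epsilon\le M$, where $M$ bounds $\rho'a\,\vert\bar c-\underline c\vert\,\vert\underline P_y\vert$ on $\overline{Q_{\tau_0,r}}$; since $w(\tau^*,x^*,y^*)$ is only known to be positive (possibly arbitrarily small), no choice of $\lambda_0$, however large, produces a contradiction---the exponential weight $e^{-\lambda_0\tau}$ absorbs zeroth-order terms proportional to $w$, not a fixed inhomogeneity. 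Note that in the one place the paper actually applies the lemma (Proposition \ref{pvolbounds}), the lower pair is $y$-independent, so $\underline P_y\equiv 0$ and your residual disappears, making your maximum-principle argument close immediately (no large $\lambda_0$ needed); but for the lemma as stated, with four arbitrary functions, the residual is a genuine obstruction that your argument does not overcome, and you would need either an estimate showing $\bigl(\bar c-\underline c\bigr)\underline P_y$ is controlled by quantities vanishing with $w$ and $w_y$, or to argue directly on the nonlinear first-order structure as in Friedman's theorem, as the paper does.
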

\begin{proof}
Even if the form of this comparison principle for sub-
and superprices seems to be quite unusual, the proof follows directly
along the lines of Friedman \cite[Theorem 2.16]{FriePDEPT} since the
functions $\hat{g}$ contains no second derivatives. To be precise:
this argument leads a version where the inequalities in
\eqref{superint}, \eqref{superbound} and the conclusion are
strict. But setting $\overline{P}^\varepsilon = \overline{P} +
\varepsilon (1+\tau)$ one gets a strict superprice and sending
$\varepsilon$ to zero yields the stated version.
\end{proof}

\noindent{\bf Proof of Proposition \ref{pvolbounds}:} 
\begin{proof}
To prove \eqref{pricebounds}, we intend to invoke the above comparison
principle for the price process given in Theorem \ref{pricePDE} since
it is clear that the Black-Scholes pricing functions are sub-
resp. supersolutions of the PDE. Unfortunately we have the
indifference price only as solution of a Dirichlet problem which does
not give rise to directly comparable lateral boundary conditions. thus
we have to alter the argument a bit.

 Denote for $N \in \mathbb{N}$ by $u^{N,\underline{\sigma}}$ the
 solution of the initial/boundary-value problems 
\begin{equation*}
\left\{
\begin{array}{rl}
-u^{N,\underline{\sigma}}_\tau +  L u^{N,\underline{\sigma}} = &\frac{1}{K}\hat{g}\bigl( -\lambda(y),\rho' K a(y) u^{N,\underline{\sigma}}_y\bigr); \\
u^{N,\underline{\sigma}}(0,x,y)  = &-(1-e^x)^+;\\
u^{N,\underline{\sigma}}(\tau,x,y)\big\vert_{\partial B(0,N)} =
&-P^{BS}(\tau,x;\underline{\sigma}). 
\end{array}
\right. 
\end{equation*}
By a classical argument \cite[Section V.\S8]{LSU}, we can extract a
subsequence $u^{N_k,\underline{\sigma}}$ of $u^{N,\underline{\sigma}}$
such that $u^{N_k,\underline{\sigma}}$ converges together with its
derivatives to $u$ and it's derivatives pointwise in $L_T$. The same is true for 
$\tilde{u}^{N}$ given by
\begin{equation*}
\left\{
\begin{array}{rl}
-\tilde{u}^N_\tau +  L \tilde{u}^N= &\frac{1}{K}\hat{g}\bigl(
-\lambda(y),\rho' K a(y) \tilde{u}^N_y\bigr); \\ 
\tilde{u}^N(0,x,y)  = &0;\\
\tilde{u}^N(\tau,x,y)\big\vert_{\partial B(0,N)} = &0.
\end{array}
\right. 
\end{equation*}
Thus $P^{N,\underline{\sigma}}(\tau,x,y) =
u^{N,\underline{\sigma}}(\tau,x,y)-\tilde{u}(\tau,x,y)$ satisfies 
\begin{equation}
\left\{
\begin{array}{rl}\label{PNPDE}
-P^{N,\underline{\sigma}}_\tau +  L P^{N,\underline{\sigma}} = &\frac{1}{K}\hat{g}\bigl( -\lambda(y),\rho' K a(y) \tilde{u}^N_y\bigr)-\frac{1}{K}\hat{g}\bigl( -\lambda(y),\rho' K a(y) u^{N,\underline{\sigma}}_y\bigr); \\
P^{N,\underline{\sigma}}(0,x,y)  = &(1-e^x)^+;\\
P^{N,\underline{\sigma}}(\tau,x,y)\big\vert_{\partial B(0,N)} = &P^{BS}(\tau,x;\underline{\sigma}).
\end{array}
\right. 
\end{equation}
and $P^{N,\underline{\sigma}} \to P$ along a subsequence.\\
Noting that $P^{BS}(\tau,x;\underline{\sigma})$ is a subprice on every
cylinder  $Q_{T,N}$ by writing it in the odd form
$P^{BS}(\tau,x;\underline{\sigma}) =
0-(-P^{BS}(\tau,x;\underline{\sigma}))$ to satisfy the comparison
principle of Lemma \ref{pricecomp}, we have
$P^{BS}(\tau,x;\underline{\sigma}) \leq P^{N,\underline{\sigma}}$ on
$Q_{T,N}$ and hence in the limit $P^{BS}(\tau,x;\underline{\sigma})
\leq P$. The other direction of Theorem \ref{pricePDE} is proved, of
course, by the same argument using $P^{BS}(\tau,x;\overline{\sigma})$
as superprice. 

Finally a reformulation of the achieved result reads
\begin{equation*}
U(\underline{\sigma}^2\tau,x) \leq U(I(\tau,x,y)^2\tau,x) \leq
U(\overline{\sigma}^2\tau,x) 
\end{equation*}
and so the strict monotonicity of $U$ with respect to the first variable yields \eqref{volbounds}.
\end{proof}

\paragraph{Deriving the PDE for the Implied Volatility}
{${}$}\\

To derive the PDE for the implied volatility, we note that from \eqref{put} and \eqref{riskPDEin}, it follows that the indifference price $P$ obeys the equation
\begin{equation}\label{mixedPDE}
-P_\tau + LP = \frac{1}{K}\Bigl(\hat{g}\bigl(
-\lambda(y),\rho' K a(y) \tilde{u}_y\bigr) - \hat{g}\bigl(
-\lambda(y),\rho' K a(y) u_y\bigr) \Bigr).
\end{equation}
Substituting $P(\tau,x,y) = U(I^2(\tau,x,y)\tau,x)$ we note that the derivatives relate
\begin{align*}\label{derivatives}
P_\tau &= U_\tau \cdot (\tau I^2)_\tau;\qquad P_x = U_\tau \cdot 2 I I_x \tau + U_x; \qquad P_y = U_\tau \cdot 2 I I_y \tau;\\
P_{xx} &= U_{\tau \tau} \cdot 4 I^2 I_x^2 \tau^2 + U_\tau \cdot 2 I_x^2 \tau + U_\tau \cdot 2 I I_{xx} \tau + U_{\tau x} \cdot 4 I I_x \tau + U_{xx}; \\
P_{xy} &= U_{\tau \tau} \cdot 4 I^2 I_x I_y \tau^2 + U_\tau \cdot 2 I_x I_y \tau + U_\tau \cdot 2 I I_{xy} \tau + U_{\tau x} \cdot 2 I I_y \tau \\
P_{yy} &= U_{\tau \tau} \cdot 4 I^2 I_y^2 \tau^2 + U_\tau \cdot 2 I_y^2 \tau + U_\tau \cdot 2 I I_{yy} \tau.
\end{align*}
Plugging this into \eqref{mixedPDE} and dividing by $U_\tau$, we derive the equation \eqref{IPDEind} by noting that
\begin{equation}
\frac{U_{xx} - U_x}{U_\tau} = 2; \qquad \frac{U_{\tau \tau}}{U_\tau} = \frac{x^2}{2 I^4 \tau^2} - \frac{1}{2 I \tau} - \frac{1}{8}; \qquad \frac{U_{\tau x}}{U_\tau} = - \frac{x}{I \tau} + \frac{1}{2}; \qquad \frac{1}{U_\tau} = \frac{2 I I_y \tau}{\tilde{u}_y - u_y}.
\end{equation}
The initial condition follows by observing that Vega vanishes as $\tau$ tends to zero:
\begin{lemma}\label{vanveg}
It holds that $\vega = \tilde{u}_y - u_y \to 0$ uniformly on compacts
as $\tau \to 0$. 
\end{lemma}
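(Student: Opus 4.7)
The crucial observation is that the initial data for both $u$ (namely $-(1-e^x)^+$) and $\tilde{u}$ (namely $0$) are independent of $y$. Consequently, one expects the $y$-derivatives $u_y$ and $\tilde{u}_y$ to vanish as $\tau\downarrow 0$, and so will the Vega $\tilde{u}_y-u_y$. My plan is to prove the slightly quantitative statement $|u_y(\tau,x,y)|=O(\tau)$ uniformly on compacts, and the same for $\tilde{u}_y$, by a translation-in-$y$ argument that exploits the fact, asserted in Theorem~\ref{pricePDE}, that $u$ and $\tilde{u}$ are bounded in $L_T$ together with their derivatives.

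For fixed small $\delta>0$, set $u^\delta(\tau,x,y):=u(\tau,x,y+\delta)$. Since the initial condition of $u$ does not depend on $y$, one has $u^\delta(0,x,y)=u(0,x,y)$. The translated function $u^\delta$ solves \eqref{riskPDEin} with every occurrence of $y$ in the coefficients $\sigma,a,m,\mu,\lambda$ and in the first argument of $\hat{g}$ replaced by $y+\delta$; call the corresponding operator $L^\delta$. Subtracting the PDEs for $u^\delta$ and $u$, and Taylor-expanding the nonlinear right-hand side in its second slot by the mean value theorem (using the smoothness of $\hat g$ granted by Lemma~\ref{admissghat}), one sees that $d^\delta:=u^\delta-u$ satisfies a linear parabolic Cauchy problem
\begin{equation*}
-d^\delta_\tau + L d^\delta + c^\delta(\tau,x,y)\,d^\delta_y = f^\delta(\tau,x,y), \qquad d^\delta(0,x,y)=0,
\end{equation*}
where $c^\delta$ is uniformly bounded on compacts (the factor $\rho' a(y)\hat{g}_{z_2}(-\lambda(y),\xi)$ for some intermediate $\xi$) and where $f^\delta$ collects two kinds of mismatch terms: $(L-L^\delta)u^\delta$, and the shift $\hat g(-\lambda(y+\delta),\rho'Ka(y+\delta)u^\delta_y)-\hat g(-\lambda(y),\rho'Ka(y)u^\delta_y)$. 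Both kinds are $O(\delta)$ uniformly on compacts, since the coefficients and $\hat g$ are $C^1$ in $y$ (Assumption~\ref{mainassump}, Definition~\ref{admiss}), and since the derivatives of $u^\delta$ appearing in these expressions (in particular $u^\delta_{xx},u^\delta_{xy},u^\delta_{yy}$ and $u^\delta_y$) are uniformly bounded on $L_T$ by Theorem~\ref{pricePDE}.

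Applying the maximum principle (or equivalently a Feynman--Kac representation) to the linear problem for $d^\delta$ on a fixed compact $K\subset\mathbb{R}^2$, one obtains $\sup_K|d^\delta(\tau,\cdot)|\le C(K)\,\tau\,\delta$ with $C(K)$ independent of $\delta\in(0,\delta_0)$; the factor $\tau$ arises by Duhamel from integrating the $O(\delta)$ forcing $f^\delta$ over $[0,\tau]$. Dividing by $\delta$ and letting $\delta\downarrow 0$ gives $|u_y(\tau,x,y)|\le C(K)\,\tau$ on $K$. An identical argument applied to $\tilde{u}$ yields $|\tilde{u}_y|\le \widetilde{C}(K)\,\tau$, whence $\sup_K|\tilde{u}_y-u_y|\to 0$ as $\tau\downarrow 0$, as claimed.

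The main obstacle is verifying that $f^\delta$ is really of order $\delta$ uniformly on compacts all the way down to $\tau=0$; this is not automatic because the initial datum $-(1-e^x)^+$ is only Lipschitz, so a naive interior Schauder estimate would blow up at $\tau=0$. What saves the argument is the global boundedness of the derivatives of $u,\tilde{u}$ on $L_T$ promised in Theorem~\ref{pricePDE}, together with the $C^{2,1}$ regularity of $\hat g$ (Lemma~\ref{admissghat}), which lets one close the bookkeeping by a pure maximum-principle estimate rather than by parabolic regularity theory up to the initial boundary.
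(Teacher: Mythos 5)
Your starting observation --- that the initial data of $u$ and $\tilde u$ are independent of $y$, so that $u_y(0,\cdot)=\tilde u_y(0,\cdot)=0$ --- is exactly the one the paper uses, but the paper then finishes in two lines: by Theorem \ref{pricePDE}, $u,\tilde u\in C^{1+\beta/2,2+\beta}(Q_{T,r})\cap C(L_T)$, hence $u_y$ and $\tilde u_y$ are $\beta/2$-H\"older continuous in $\tau$ on each cylinder, and $|\tilde u_y-u_y|\le|\tilde u_y(\tau,y)-\tilde u_y(0,y)|+|u_y(\tau,x,y)-u_y(0,x,y)|\le 2c\,\tau^{\beta/2}\to0$. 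Your translation-in-$y$ argument aims at the stronger statement $|u_y|=O(\tau)$, and as written it has two genuine gaps.

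First, the assertion that $f^\delta=O(\delta)$ uniformly on compacts ``since the coefficients and $\hat g$ are $C^1$ in $y$'' misreads Assumption \ref{mainassump}: only $\sigma$ and $a$ are in $C^{1+\beta}_{loc}$, while $\mu$ and $m$ (hence $\lambda=\mu/\sigma$) are merely locally $\beta$-H\"older with $\beta$ possibly less than $1$. So the drift mismatch $\bigl(m(y)-m(y+\delta)\bigr)u^\delta_y$, the corresponding piece of $(L-L^\delta)u^\delta$ coming from $\rho a\lambda$, and the shift $\hat g\bigl(-\lambda(y+\delta),\cdot\bigr)-\hat g\bigl(-\lambda(y),\cdot\bigr)$ are only $O(\delta^\beta)$; a bound $|d^\delta|\le C\tau\delta^\beta$ gives nothing after dividing by $\delta$ and letting $\delta\downarrow0$ --- you cannot conclude $u_y=O(\tau)$, nor even $u_y\to0$. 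Second, the maximum-principle step on a fixed compact $K$ is not a pure Cauchy problem: $d^\delta$ does not vanish on the lateral boundary of the cylinder over $K$, where it is only of size $O(\delta)$ (from boundedness of $u_y$), with no factor $\tau$; Duhamel over $[0,\tau]$ therefore yields at best $|d^\delta|\le C\bigl(\delta+\tau\sup|f^\delta|\bigr)$, and dividing by $\delta$ returns only the trivial bound $|u_y|\le C$. Removing the lateral boundary would mean running the argument on all of $\mathbb{R}^2$, which requires global (not local) Lipschitz control of the coefficients and a maximum principle on the unbounded layer with the possibly unbounded drift $m$ --- none of which the standing assumptions provide. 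Since Theorem \ref{pricePDE} already gives H\"older continuity of $u_y,\tilde u_y$ in $\tau$ down to $\tau=0$, the paper's direct argument both avoids these obstacles and suffices for the lemma.
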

\begin{proof}
Choose the cylinder $Q_{T,r}$ in the layer $L_T$ such that the compact set
is contained. Thus $u$, $\tilde{u} \in C^{1+\beta/2, 2+\beta}(Q_{T,r})
\cap C(L_T)$ implies that $u_y$ and $\tilde{u}_y$ are
$\beta/2$-H\"{o}lder continuous with some H\"{o}lder constant c,
whence
\begin{align*}
\vert \vega(\tau,x,y) \vert &= \vert \tilde{u}_y(\tau,y)-u_y(\tau,x,y) \vert \\
&\leq \vert \tilde{u}_y(\tau,y)-\tilde{u}_y(0,y) \vert + \vert
u_y(\tau,x,y)-u_y(0,x,y) \vert \leq 2 c \tau^\frac{\beta}{2} \to 0 
\end{align*}
as $\tau \to 0$ since $\tilde{u}_y(0,y)$ and $u_y(0,x,y)$ exist and
are equal to zero by the definition of the initial
conditions.
\end{proof}

\paragraph{Implied Volatility - Proof of the Main Theorem}

\begin{lemma}[{\bf Second Price Comparison}]\label{secondprice}
Recall that $u$ is the solution of the Cauchy problem \eqref{riskPDEin} and $\tilde{u}$ of the same problem with initial condition equal to zero. Suppose that  $\overline{P}$, $\underline{P} \in  C^{1, 2}(Q_{\tau_0,r}) \cap C(\overline{Q_{\tau_0,r}}) $, $0<\tau_0 \leq T$, satisfy
\begin{align}
\overline{P}_\tau &\geq  L \overline{P} -\frac{1}{K}\Bigl(\hat{g}\bigl( -\lambda(y),\rho' K a(y)\tilde{u_y}\bigr) -\hat{g}\bigl( -\lambda(y),\rho' K a(y) u_y\bigr)\Bigr), \quad \mbox{ in }  Q_{\tau_0,r}\label{superint2}\\
\underline{P}_\tau &\leq  L \underline{P} -\frac{1}{K}\Bigl(\hat{g}\bigl( -\lambda(y),\rho' K a(y) \tilde{u_y}\bigr) -\hat{g}\bigl( -\lambda(y),\rho' K a(y) u_y\bigr)\Bigr), \quad \mbox{ in }  Q_{\tau_0,r} 
\end{align}
as well as
\begin{equation}\label{superbound2}
\overline{P}\geq P \geq \underline{P} \quad \mbox{ on } \quad
]0,\tau_0] \times \mathbb{R}^2 \cap \partial Q_{\tau_0,r} , 
\end{equation}
and
\begin{equation*}
\overline{P}(0,x,y) = \underline{P}(0,x,y) = P(0,x,y) = (1-e^x)^+
\end{equation*}
then $\overline{P}\geq P \geq \underline{P}$ on $Q_{\tau_0,r}$.
\end{lemma}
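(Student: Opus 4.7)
The strategy is to observe that, because the nonlinear term on the right-hand side of the differential inequalities \eqref{superint2} involves only the fixed functions $u_y$ and $\tilde{u}_y$ (and not the derivatives of the unknowns $\overline{P}, \underline{P}$), the problem reduces after a single subtraction to a standard weak maximum principle for a linear homogeneous parabolic operator. This is the key simplification compared with Lemma \ref{pricecomp}, where the analogous nonlinear term depended on the sub/supersolutions themselves and had to be handled by a genuine comparison-type argument.

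First I would record that by \eqref{mixedPDE}, the price $P = \tilde{u}-u$ solves, with equality,
\[
-P_\tau + L P \;=\; \tfrac{1}{K}\bigl(\hat{g}(-\lambda(y),\rho' K a(y)\tilde{u}_y) - \hat{g}(-\lambda(y),\rho' K a(y) u_y)\bigr)
\]
in $Q_{\tau_0,r}$. Setting $w:=\overline{P}-P$ and subtracting this identity from the first inequality in \eqref{superint2}, the $\hat{g}$-terms cancel identically and I obtain the linear homogeneous parabolic inequality $w_\tau - L w \geq 0$ in $Q_{\tau_0,r}$. The coincidence of initial data together with the lateral-boundary condition \eqref{superbound2} gives $w \geq 0$ on the full parabolic boundary $\{0\}\times B(m,r) \cup {]0,\tau_0]}\times \partial B(m,r)$. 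The same computation carried out with $v:=P-\underline{P}$ yields $v_\tau - L v \geq 0$ in $Q_{\tau_0,r}$ and $v\geq 0$ on the parabolic boundary.

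Because $L$ is, under Assumption \ref{mainassump} and with $|\rho|<1$, a uniformly elliptic linear operator with bounded coefficients on the bounded cylinder $Q_{\tau_0,r}$ (its principal symbol has determinant $\tfrac{1}{4}(1-\rho^2)\sigma^2(y)a^2(y)>0$), the classical weak maximum principle for linear parabolic operators applies, and I would cite Friedman \cite{FriePDEPT} as in the proof of Lemma \ref{pricecomp}. This gives $w\geq 0$ and $v\geq 0$ throughout $Q_{\tau_0,r}$, i.e., $\overline{P}\geq P\geq \underline{P}$, as desired. If the precise form of the maximum principle being invoked requires a strict supersolution, I would employ the same auxiliary-function device already used in Lemma \ref{pricecomp}, replacing $w$ by $w+\varepsilon(1+\tau)$, deriving the strict version of the conclusion, and then sending $\varepsilon \downarrow 0$.

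I do not anticipate any serious obstacle in this argument; it is essentially a linearization of the first comparison lemma around the fixed pair $(u,\tilde{u})$. The only bookkeeping to watch is that the $\hat{g}$-terms genuinely cancel (which is exactly why the hypothesis of the lemma is stated with $u_y,\tilde{u}_y$ rather than with derivatives of $\overline{P}$ and $\underline{P}$), and that the parabolic boundary of $Q_{\tau_0,r}$ is entirely covered by the combination of the initial-data equality and the lateral bound \eqref{superbound2}.
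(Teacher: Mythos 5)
Your argument is correct and coincides with the paper's own proof: both subtract the exact identity satisfied by $P=\tilde{u}-u$ so that the $\hat{g}$-terms (which involve only the fixed functions $u_y,\tilde{u}_y$) cancel, reducing the claim to the classical weak maximum principle for the linear parabolic inequality $(\overline{P}-P)_\tau - L(\overline{P}-P)\geq 0$ on the parabolic boundary data, and symmetrically for $P-\underline{P}$. The extra remarks on uniform ellipticity and the $\varepsilon(1+\tau)$ device are harmless but not needed beyond what the paper already does.
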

\begin{proof}
We note that inequality \eqref{superint2} implies
$(\overline{P}-P)_\tau \geq L(\overline{P}-P)$ which implies
together with the lateral bound $(\overline{P}-P)\geq 0$ on
$]0,\tau_0] \times \mathbb{R}^2 \cap \partial Q_{\tau_0,r}$ and the
initial condition $\overline{P}(0,x,y)-P(0,x,y)=0$ that
$\overline{P}\geq P$ on $Q_{\tau_0,r}$ by the classical comparison
principle. The second inequality is proved in the same way.
\end{proof}

\begin{lemma}[{\bf Volatility Comparison}]\label{volcomp}
Suppose that $\overline{I}$, $\underline{I} \in  C^{1,
  2}(Q_{\tau_0,r}) \cap C(\overline{Q_{\tau_0,r}}) $, $0<\tau_0 \leq T$,
satisfy 
\begin{equation*}
(\tau \underline{I}^2)_\tau \leq  \tau \underline{I}\mathcal{M}\underline{I} + \underline{I}^2 \mathcal{G}\underline{I} + \tau q(y)\underline{I}\underline{I}_y
-2 \tau \underline{I} \underline{I}_y\Biggl( \rho a(y) \lambda(y) + \frac{\hat{g}\bigl(-\lambda(y),\rho' K a(y) \tilde{u}_y \bigr) - \hat{g}\bigl(-\lambda(y),\rho' K a(y) u_y \bigr)}{ K (\tilde{u}_y - u_y)}\Biggr)
\end{equation*}
resp.
\begin{align*}
(\tau \overline{I}^2)_\tau \geq \tau \overline{I}\mathcal{M}\overline{I} + \overline{I}^2 \mathcal{G}\overline{I} + \tau q(y)\overline{I}\overline{I}_y-2 \tau \overline{I} \overline{I}_y\Biggl( \rho a(y) \lambda(y) + \frac{\hat{g}\bigl(-\lambda(y),\rho' K a(y) \tilde{u}_y \bigr) - \hat{g}\bigl(-\lambda(y),\rho' K a(y) u_y \bigr)}{ K (\tilde{u}_y - u_y)}\Biggr)
\end{align*}
in $Q_{\tau_0,r}$ together with the lateral comparison
\begin{equation*}
\underline{I}(\tau,x,y) \leq I(\tau,x,y) \leq \overline{I}(\tau,x,y) \quad \mbox{ on }\quad   ]0,\tau_0] \times \mathbb{R}^2 \cap \partial Q_{\tau_0,r}
\end{equation*}
and the initial growth condition
\begin{equation}\label{growth}
\lim_{\tau \to 0}\tau \underline{I}^2(\tau,x,y) = \lim_{\tau \to 0}\tau \overline{I}^2(\tau,x,y) = 0 \quad \mbox{ on } \quad \{0\} \times \mathbb{R}^2 \cap \overline{Q_{\tau_0,r}}.
\end{equation}
Then it holds that
\begin{equation*}
\underline{I}(\tau,x,y) \leq I(\tau,x,y) \leq \overline{I}(\tau,x,y)
\quad \mbox{ in }\quad  Q_{\tau_0,r}. 
\end{equation*}
\end{lemma}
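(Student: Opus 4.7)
The plan is to translate the differential inequalities for $\overline{I}$ and $\underline{I}$ back to the price level via the Black--Scholes inversion $P(\tau,x,y) = U(I^2(\tau,x,y)\tau,x)$ used to derive \eqref{Ieqn}, then apply the Second Price Comparison (Lemma \ref{secondprice}), and finally unwind via the strict monotonicity of $U(\cdot,x)$.

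Concretely, I set $\overline{P}(\tau,x,y) := U(\overline{I}^2(\tau,x,y)\tau,x)$ and $\underline{P}(\tau,x,y) := U(\underline{I}^2(\tau,x,y)\tau,x)$. The initial growth hypothesis \eqref{growth} forces $\overline{I}^2\tau,\underline{I}^2\tau \to 0$ as $\tau \downarrow 0$, so by continuity of $U$,
\[
\overline{P}(0,x,y) = \underline{P}(0,x,y) = U(0,x) = (1-e^x)^+ = P(0,x,y).
\]
On the lateral boundary $]0,\tau_0]\times\mathbb{R}^2 \cap \partial Q_{\tau_0,r}$, the hypothesis $\underline{I}\leq I\leq \overline{I}$ combined with the strict monotonicity of $s\mapsto U(s,x)$ on $s>0$ (the Black--Scholes vega is strictly positive) gives $\underline{P}\leq P\leq \overline{P}$, matching the lateral requirement \eqref{superbound2}.

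The main step is to verify that $\overline{P}$ and $\underline{P}$ satisfy the super- and sub-price differential inequalities \eqref{superint2} of Lemma \ref{secondprice}. To do this I would differentiate $\overline{P}=U(\overline{I}^2\tau,x)$ using the chain-rule identities already recorded in the derivation of \eqref{IPDEind} (the explicit formulas for $P_\tau,P_x,P_y,P_{xx},P_{xy},P_{yy}$ together with the auxiliary identities $\frac{U_{xx}-U_x}{U_\tau}=2$, $\frac{U_{\tau\tau}}{U_\tau}=\frac{x^2}{2I^4\tau^2}-\frac{1}{2I\tau}-\frac18$, $\frac{U_{\tau x}}{U_\tau}=-\frac{x}{I\tau}+\frac12$), substitute into $-\overline{P}_\tau + L\overline{P}$, and collect. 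The postulated inequality for $\overline{I}$ is then equivalent, after multiplication by the strictly positive factor $U_\tau(\overline{I}^2\tau,x)$ (which preserves the direction of inequality), to
\[
-\overline{P}_\tau + L\overline{P} \;\geq\; \frac{1}{K}\Bigl(\hat{g}\bigl(-\lambda(y),\rho' K a(y)\tilde{u}_y\bigr)-\hat{g}\bigl(-\lambda(y),\rho' K a(y)u_y\bigr)\Bigr),
\]
i.e.\ \eqref{superint2}; the analogous computation with reversed inequalities applies to $\underline{P}$. Note that the driver-difference on the right of \eqref{mixedPDE} is evaluated at $\tilde{u}_y$ and $u_y$ rather than at derivatives of $\overline{P}$ itself, which is exactly why the lemma's hypothesis contains the quotient $(\hat{g}(\cdot,\rho' K a \tilde{u}_y)-\hat{g}(\cdot,\rho' K a u_y))/(K(\tilde{u}_y-u_y))$ and nothing involving $\overline{P}_y$.

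With \eqref{superint2}--\eqref{superbound2} in force, Lemma \ref{secondprice} delivers $\underline{P}\leq P\leq \overline{P}$ throughout $Q_{\tau_0,r}$, and inverting via the strictly increasing $s\mapsto U(s,x)$ on $s>0$ gives $\underline{I}^2\tau\leq I^2\tau\leq \overline{I}^2\tau$, whence $\underline{I}\leq I\leq \overline{I}$ inside the cylinder. I expect the principal difficulty to lie in the bookkeeping of the chain-rule step: one has to check that the terms appearing in the assumed inequality for $\overline{I}$ (those involving $\mathcal{M}\overline{I}$, $\mathcal{G}\overline{I}$, $q(y)\overline{I}\,\overline{I}_y$ and the extra $\rho a(y)\lambda(y)$) combine with the coefficients arising from $U_{\tau\tau}$, $U_{\tau x}$, $U_{xx}$ in $L\overline{P}$ in precisely the way needed to isolate the driver-difference on the right-hand side, while the strict positivity of $U_\tau$ guarantees the inequality direction is retained after multiplying through.
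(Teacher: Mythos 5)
Your proposal is correct and follows essentially the same route as the paper's own proof: define $\overline{P}=U(\overline{I}^2\tau,x)$, $\underline{P}=U(\underline{I}^2\tau,x)$, recover the initial condition from the growth hypothesis and continuity of $U$, the lateral bounds from monotonicity of $U$, verify the differential inequalities of Lemma \ref{secondprice} by the same chain-rule computation used to derive \eqref{IPDEind} (with $U_\tau>0$ preserving the inequality direction), and then invert via the strict monotonicity of $U$ in its first argument.
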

\begin{proof}
Define first $\overline{P}(\tau,x,y) :=
U(\overline{I}^2(\tau,x,y)\tau,x)$ and $\underline{P}(\tau,x,y) :=
U(\underline{I}^2(\tau,x,y)\tau,x)$. Then by the same calculation as
in the derivation of the PDE \eqref{IPDEind} of the implied volatility in the paragraph above
we get
\begin{align*}
\overline{P}_\tau &\geq  L \overline{P} -\frac{1}{K}\Bigl(\hat{g}\bigl( -\lambda(y),\rho' K a(y)\tilde{u_y}\bigr) -\hat{g}\bigl( -\lambda(y),\rho' K a(y) u_y\bigr)\Bigr), \quad \mbox{ in }  Q_{\tau_0,r}\\
\underline{P}_\tau &\leq  L \underline{P} -\frac{1}{K}\Bigl(\hat{g}\bigl( -\lambda(y),\rho' K a(y) \tilde{u_y}\bigr) -\hat{g}\bigl( -\lambda(y),\rho' K a(y) u_y\bigr)\Bigr), \quad \mbox{ in }  Q_{\tau_0,r} 
\end{align*}
as well as the lateral boundary condition
\begin{equation*}
\underline{P}(\tau,x,y) \leq P(\tau,x,y) \leq \overline{P}(\tau,x,y)
\quad \mbox{ in }\quad Q_{\tau_0,r}. 
\end{equation*}
Moreover, the growth condition \eqref{growth} implies by the
continuity of $U$ that $\overline{P}(0,x,y) = \underline{P}(0,x,y) =
P(0,x,y) = (1-e^x)^+$. Thus we can use Lemma \ref{secondprice} to
infer $\underline{P}(\tau,x,y) \leq P(\tau,x,y) \leq
\overline{P}(\tau,x,y)$ in $Q_{\tau_0,r}$ and the strict monotonicity of the
function $U$ in the first component yields the result.
\end{proof}

\noindent{\bf Proof of Theorem \ref{maintheorem}:} 
\begin{proof}
Remark first that if there
exists a solution to the PDE with some fixed initial condition, it has to be unique by the smoothness and
strict monotonicity of $U$ (and the boundedness of $I$ proven in Proposition \ref{pvolbounds}) since otherwise the
solution of the pricing PDE (Theorem \ref{pricePDE}) would not be
unique. By the same reasoning we get also $I \in  C^{1+\beta/2,
  2+\beta}(Q_{T,r}) \cap C(L_T)$. Moreover, the eikonal equation
\eqref{eikonal} has a unique viscosity solution as proved in
\cite[Section 3.2]{BBF}. To prove the theorem we will hence show that the solution of the eikonal equation is the only possible initial condition, i.e. that any solution of the PDE \eqref{IPDEind} has the eikonal equation as it's small time limit. More precisely we will show that
there exist parametrized families of (time-independent) local super-
and subsolutions of \eqref{IPDEind} which converge locally uniformly to the eikonal
equation. This is done quite similar as in \cite[Section 3.4.]{BBF},
using an adapted vanishing viscosity method. However, in our setting
the bounds on the volatility  $\sigma$ enable us to simplify the proof
and circumvent some obscurities in the local volatility argument in
\cite{BBF}. 

Define $\overline{I}^{\varepsilon, \delta}(x,y)$ for $\varepsilon$,
$\delta>0$ as the solution of 
\begin{equation}\label{logsubsteq}
\left\{
\begin{array}{rl}
-\delta =& - \bigl(\overline{I}^{\varepsilon, \delta}\bigr)^2
+\bigl(\overline{I}^{\varepsilon, \delta}\bigr)^2 \mathcal{G} \frac{x}{\overline{I}^{\varepsilon, \delta}}+ \varepsilon \Delta
\bigl(\ln{(\overline{I}^{\varepsilon, \delta})}\bigr)    ; \\ 
\overline{I}^{\varepsilon, \delta} \big\vert_{\partial B(m,r)} = &\overline{\sigma}.
\end{array}
\right. 
\end{equation}
where $B(m,r)$ is an arbitrary disk. We will show that for $r$,
$\delta$, $\varepsilon >0$ there exists a solution to this equation
and for fixed $r$ and $\delta$ there exist $\varepsilon_0>0$,
$\tau_0>0$ such that for all $0<\varepsilon< \varepsilon_0$ this is a
supersolution of \eqref{IPDEind} in $Q_{\tau_0,r}$. Moreover, we show
that $\overline{I}^{\varepsilon, \delta} \to I^0$ locally uniformly as
we send fist $\varepsilon$ and then $\delta$ to zero.

{\it First step:} Making the change of variables $w: =
\ln{\bigl(\overline{I}^{\varepsilon, \delta}\bigr)}$ one gets
\begin{align*}
\bigl(\overline{I}^{\varepsilon, \delta}\bigr)^2 \mathcal{G} \frac{x}{\overline{I}^{\varepsilon, \delta}}&= \biggl(\overline{I}^{\varepsilon, \delta}\biggr)^2 \Biggl(\sigma^2(y)\biggl(\frac{x}{\overline{I}^{\varepsilon,\delta}}\biggr)_x^2 + 2 \rho \sigma(y) a(y)\biggl(\frac{x}{\overline{I}^{\varepsilon,
  \delta}}\biggr)_x\biggl(\frac{x}{\overline{I}^{\varepsilon, \delta}}\biggr)_y +  a^2(y) \biggl(\frac{x}{\overline{I}^{\varepsilon,
  \delta}}\biggr)_y^2\Biggr) \nonumber\\ 
&=\sigma^2(y) (1-xw_x)^2 -2\rho \sigma(y) a(y) x(1-x w_x)w_y + a^2(y)x^2w_y^2 =: \tilde{\mathcal{G}}w
\end{align*}
and the equation \eqref{logsubsteq} becomes
\begin{equation*}
\left\{
\begin{array}{rl}
-\delta = &- e^{2w} + \tilde{\mathcal{G}}w + \varepsilon \Delta w ; \\
w \vert_{\partial B(m,r)} = &\ln{\overline{\sigma}}.
\end{array}
\right. 
\end{equation*}
which admits a solution $w \in C^{2+\beta}(B(m,r))$ \cite[Theorem
4.8.3]{LU} which is unique for sufficiently small $r$ \cite[Theorem
4.2.1]{LU}. 

{\it Second Step:} By the H\"{o}lder property of the derivatives of
$w$ resp. $\overline{I}^{\varepsilon, \delta}$ implying the boundedness of the functions on the cylinder (as well as the boundedness of $u_y$ and $\tilde{u}_y$ and the
differentiability of $\hat{g}$) we can conclude that there exist 
constants $c_1$-$c_4$ solely depending on $r$ such that  
\begin{align*}
 - \frac{1}{4} \tau^2 \bigl(\overline{I}^{\varepsilon,\delta}\bigr)^2\mathcal{G} \overline{I}^{\varepsilon,\delta} &\leq c_1(r) \tau^2\\
\tau \overline{I}^{\varepsilon,\delta} \mathcal{M} \overline{I}^{\varepsilon,\delta} &\leq c_2(r) \tau\\
\tau\biggl( q (y)- 2 \frac{\hat{g}\bigl(-\lambda(y),\rho' K
  a(y) \tilde{u}_y \bigr) -
  \hat{g}\bigl(-\lambda(y),\rho' K a(y) u_y \bigr)}{ K
  (\tilde{u}_y - u_y)} \biggr) \overline{I}^{\varepsilon,\delta} \overline{I}^{\varepsilon,\delta}_y &\leq c_3(r)\tau\\
-\varepsilon \Delta \bigl(\ln{(\overline{I}^{\varepsilon, \delta})}\bigr) &\leq \varepsilon c_4(r)
\end{align*}
in $B(m,r)$. We can conclude that
\begin{align*}
\bigl((\tau( \overline{I}^{\varepsilon, \delta})^2\bigr)_\tau = & (\overline{I}^{\varepsilon, \delta})^2 = \delta + I\bigl(\overline{I}^{\varepsilon,\delta}\bigr)^2 \mathcal{G} \frac{x}{\overline{I}^{\varepsilon,\delta}}  + \varepsilon \Delta \bigl(\ln{(\overline{I}^{\varepsilon, \delta})}\bigr)\\
\geq & \tau \overline{I}^{\varepsilon,\delta} \mathcal{M} \overline{I}^{\varepsilon,\delta} +  \bigl(\overline{I}^{\varepsilon,\delta}\bigr)^2 \mathcal{G} \frac{x}{\overline{I}^{\varepsilon,\delta}} - \frac{1}{4} \tau^2 \bigl(\overline{I}^{\varepsilon,\delta}\bigr)^2\mathcal{G} \overline{I}^{\varepsilon,\delta} +\tau q (y) \overline{I}^{\varepsilon,\delta} \overline{I}^{\varepsilon,\delta}_y \\
&- 2\tau I^{\varepsilon,\delta} I^{\varepsilon,\delta}_y \frac{\hat{g}\bigl(-\lambda(y),\rho' K
  a(y) \tilde{u}_y \bigr) -
  \hat{g}\bigl(-\lambda(y),\rho' K a(y) u_y \bigr)}{ K
  (\tilde{u}_y - u_y)} \\
& + \delta - \Bigl( c_1(r) \tau^2 + c_2(r) \tau +
c_3(r)\tau + \varepsilon c_4(r) \Bigr), 
\end{align*}
thus for given $\delta>0$ and $r>0$ we can find indeed positive bounds on $\varepsilon_0$, $\tau_0$ such that $ \overline{I}^{\varepsilon, \delta}$ is a supersolution of \eqref{IPDEind} for $0<\varepsilon \leq \varepsilon_0$ in $Q_{\tau_0,r}$. 
In the same way one proves that we get for  $\underline{I}^{\varepsilon, \delta}$ of
\begin{equation*}
\left\{
\begin{array}{rl}
\delta = & - (\underline{I}^{\varepsilon, \delta})^2  + \bigl(\underline{I}^{\varepsilon, \delta}\bigr)^2 \mathcal{G} \frac{x}{\underline{I}^{\varepsilon, \delta}} + \varepsilon \Delta \bigl(\ln{(\underline{I}^{\varepsilon, \delta})}\bigr) ; \\
\underline{I}^{\varepsilon, \delta} \big\vert_{\partial B(m,r)} = & \underline{\sigma}.
\end{array}
\right. 
\end{equation*}
subsolutions.

 {\it Third Step:} Having now super- and subsolutions, we can (by sake of Proposition \ref{pvolbounds}) invoke
 now the comparison principle Lemma \ref{volcomp} to conclude that 
\begin{equation*}
\underline{I}^{\varepsilon, \delta}(x,y) \leq I(\tau,x,y) \leq
\overline{I}^{\varepsilon, \delta}(x,y) \quad \mbox{ in } \quad
Q_{\tau_0,r} 
\end{equation*}
for all $0<\varepsilon<\varepsilon_0$,  $\varepsilon_0$ and $\tau_0$
chosen as above.  Thus we have
\begin{equation*}
\underline{I}^{\varepsilon, \delta}(x,y) \leq \liminf_{\tau \to 0}
I(\tau,x,y) \leq \limsup_{\tau \to 0} I(\tau,x,y) \leq
\overline{I}^{\varepsilon, \delta}(x,y). 
\end{equation*}
Next we want to send $\varepsilon$ to zero. Therefore we note that the
families of solutions $ \overline{I}^{\varepsilon, \delta}$,
$\underline{I}^{\varepsilon, \delta}$ are uniformly bounded (by the constants $\sqrt{\delta} \vee \overline{\sigma}$ and $\sqrt{\delta} \wedge \underline{\sigma}$ as a consequence of the comparison principle \cite[Theorem 10.7.(i)]{GilTrud} applied to the equations in the log-variables) and equicontinuous
in $\varepsilon$ (since H\"{o}lder-continuous with the same H\"{o}lder
constants). Thus by the Arzel\`{a}-Ascoli theorem $
\overline{I}^{\varepsilon, \delta}$ converges along a subsequence
uniformly on compacts to some limit function $\overline{I}^{\delta}
\in C^{0+\beta}(Q_{\tau_0,r})$. This function is a viscosity
solution of the PDE 
\begin{equation*}
\left\{
\begin{array}{rl}
-\delta =& - (\overline{I}^{\delta})^2 +\bigl(\overline{I}^{\delta}\bigr)^2 \mathcal{G} \frac{x}{\overline{I}^{\delta}}; \\
\overline{I}^{ \delta} \big\vert_{\partial B(m,r)} = &\overline{\sigma}.
\end{array}
\right. 
\end{equation*}
An analogous result holds true for the subsolutions. Now sending $\delta \to 0$, this gives by the same argument a solution of the PDE
\begin{equation*}
\mathcal{G} \frac{x}{\overline{I}^0} = 1
\end{equation*}
which satisfies $I(0,y)=0$. Thus for $\tau \to 0$, $I(\tau,x,y)$
converges locally uniformly to $I^0$ which is nothing else than the
(by \cite[Section 3.2]{BBF} unique) viscosity solution of the eikonal equation \eqref{eikonal} with
$\psi = x/I^0$.
\end{proof}

\bibliographystyle{alpha}
\bibliography{Bib_IndifferenceSmile_final.bib}

\def\ocirc#1{\ifmmode\setbox0=\hbox{$#1$}\dimen0=\ht0 \advance\dimen0
  by1pt\rlap{\hbox to\wd0{\hss\raise\dimen0
  \hbox{\hskip.2em$\scriptscriptstyle\circ$}\hss}}#1\else {\accent"17 #1}\fi}
  \def\cprime{$'$}
\begin{thebibliography}{CHMP02}

\bibitem[ADEH99]{Artzner}
P.~Artzner, F.~Delbaen, J.~M. Eber, and D.~Heath.
\newblock Coherent measures of risk.
\newblock {\em Math. Finance}, 9:203--228, 1999.

\bibitem[BBF04]{BBF}
H.~Berestycki, J.~Busca, and I.~Florent.
\newblock Computing the implied volatility in stochastic volatility models.
\newblock {\em Comm. Pure Appl. Math.}, 57(10):1352--1373, 2004.

\bibitem[BH08]{BH}
P.~Briand and Y.~Hu.
\newblock Quadratic {BSDE}s with convex generators and unbounded terminal
  conditions.
\newblock {\em Probab. Theory Related Fields}, 141(3-4):543--567, 2008.

\bibitem[BK05]{benth}
F.~Benth and K.~Karlsen.
\newblock A {PDE} representation of the density of the minimal entropy
  martingale measure in stochastic volatility markets.
\newblock {\em Stochastics and Stochastics Reports}, 77(2):109--137, 2005.

\bibitem[BK09]{BK}
P.~Barrieu and N.~El Karoui.
\newblock Pricing, hedging, and designing derivatives with risk measures.
\newblock In R.~Carmona, editor, {\em Indifference pricing. Theory and
  applications}, pages 75--146. Princeton University Press, 2009.

\bibitem[CHMP02]{Coq}
Fran{\c{c}}ois Coquet, Ying Hu, Jean M{\'e}min, and Shige Peng.
\newblock Filtration-consistent nonlinear expectations and related
  {$g$}-expectations.
\newblock {\em Probab. Theory Related Fields}, 123(1):1--27, 2002.

\bibitem[CHO11]{cho}
A.~M.~G. Cox, D.~Hobson, and J.~Ob{\l}\'{o}j.
\newblock Utility theory front to back -- inferring utility from agents'
  choices.
\newblock {\em Preprint}, 2011.

\bibitem[DS06]{DS}
F.~Delbaen and W.~Schachermayer.
\newblock {\em The mathematics of arbitrage}.
\newblock Springer Finance. Springer-Verlag, Berlin, 2006.

\bibitem[FFF10]{FFF}
J.~Feng, M.~Forde, and J.-P. Fouque.
\newblock Short maturity asymptotics for a fast mean-reverting {H}eston
  stochastic volatility model.
\newblock {\em SIAM J. Financial Math.}, 1:126--141, 2010.

\bibitem[FJ09]{FoJa09a}
M.~Forde and A.~Jacquier.
\newblock Small-time asymptotics for implied volatility under the {H}eston
  model.
\newblock {\em Int. J. Theor. Appl. Finance}, 12(6):861--876, 2009.

\bibitem[FJL10]{JFL}
M.~Forde, A.~Jacquier, and R.~Lee.
\newblock Small-time asymptotics for implied volatility under the {H}eston
  model: Part 2.
\newblock {\em Preprint}, 2010.

\bibitem[Fri64]{FriePDEPT}
A.~Friedman.
\newblock {\em Partial differential equations of parabolic type}.
\newblock Prentice-Hall Inc., Englewood Cliffs, N.J., 1964.

\bibitem[FS02]{FollSchiedpaper}
H.~F{\"o}llmer and A.~Schied.
\newblock Convex measures of risk and trading constraints.
\newblock {\em Finance Stoch.}, 6(4):429--447, 2002.

\bibitem[FS11]{FScB}
H.~F\"{o}llmer and A.~Schied.
\newblock {\em Stochastic finance: An introduction in discrete time}.
\newblock Walter de Gruyter, 3rd edition, 2011.

\bibitem[Gia06]{RoGi}
E.~Rosazza Gianin.
\newblock Risk measures via $g$-expectations.
\newblock {\em Insurance: Mathematics and Economics}, 39(1):19--34, 2006.

\bibitem[GT01]{GilTrud}
D.~Gilbarg and N.~S. Trudinger.
\newblock {\em Elliptic partial differential equations of second order}.
\newblock Classics in Mathematics. Springer-Verlag, Berlin, 2001.
\newblock Reprint of the 1998 edition.

\bibitem[HUL01a]{HUL2}
J.-B. Hiriart-Urruty and C.~Lemar{\'e}chal.
\newblock {\em Convex analysis and minimization algorithms II}.
\newblock Grundlehren der mathematischen Wissenschaften. Springer-Verlag,
  Berlin, 2001.

\bibitem[HUL01b]{HUL}
J.-B. Hiriart-Urruty and C.~Lemar{\'e}chal.
\newblock {\em Fundamentals of convex analysis}.
\newblock Grundlehren Text Editions. Springer-Verlag, Berlin, 2001.
\newblock Abridged version of {\it Convex analysis and minimization algorithms
  I and II}.

\bibitem[Kob00]{Kob}
M.~Kobylanski.
\newblock Backward stochastic differential equations and partial differential
  equations with quadratic growth.
\newblock {\em Ann. Probab.}, 28(2):558--602, 2000.

\bibitem[KS05]{KS2}
S.~Kl{\"o}ppel and M.~Schweizer.
\newblock Dynamic utility indifference valuation via convex risk measures.
\newblock {\em NCRR FINRISK working paper No.209. ETH Z\"{u}rich}, 2005.
\newblock Shorter version published in Math. Finance 17(4):599--627, 2007.

\bibitem[LSU67]{LSU}
O.~A. Ladyzhenskaya, V.~A. Solonnikov, and N.~N. Ural{\cprime}tseva.
\newblock {\em Linear and quasilinear equations of parabolic type}.
\newblock Translated from the Russian by S. Smith. Translations of Mathematical
  Monographs, Vol. 23. American Mathematical Society, Providence, R.I., 1967.

\bibitem[LU68]{LU}
O.~A. Ladyzhenskaya and N.~N. Ural{\cprime}tseva.
\newblock {\em Linear and quasilinear elliptic equations}.
\newblock Translated from the Russian by Scripta Technica, Inc. Translation
  editor: L. Ehrenpreis. Academic Press, New York, 1968.

\bibitem[MZ09]{muszar-long}
M.~Musiela and T.~Zariphopoulou.
\newblock The spot and forward dynamic utilities and the associated pricing
  systems: Case study of the binomial model.
\newblock In R.~Carmona, editor, {\em Indifference pricing. Theory and
  applications}, pages 3--43. Princeton University Press, 2009.

\bibitem[Nut07]{Nut}
M.~Nutz.
\newblock Quadratic {PDE} and backward {SDE}.
\newblock {\em Diploma Thesis ETH Zurich}, 2007.

\bibitem[Pen04]{peng}
S.~Peng.
\newblock Nonlinear expectations, nonlinear evaluations and risk measures.
\newblock {\em Lecture Notes in Mathematics}, 1856:165--253, 2004.

\bibitem[Sta10]{Stad}
M.~Stadje.
\newblock Extending dynamic convex risk measures from discrete time to
  continuous time: A convergence approach.
\newblock {\em Insurance Math. Econom}, 47(3):391--404, 2010.

\bibitem[SZ05]{sz}
R.~Sircar and T.~Zariphopoulou.
\newblock Bounds \& asymptotic approximations for utility prices when
  volatility is random.
\newblock {\em SIAM J. Control \& Optimization}, 43(4):1328--1353, 2005.

\bibitem[Tou07]{Tou}
A.~Toussaint.
\newblock Hedging and pricing with ${L}^2$ convex risk measures in incomplete
  markets.
\newblock {\em PhD thesis, Princeton University}, 2007.

\end{thebibliography}
\end{document}